\documentclass[11pt]{article}

\usepackage{graphicx,psfrag}

\usepackage{amssymb,amsmath,amsthm,enumerate}
\usepackage{fullpage}
\usepackage{bbm}

\usepackage{tikz}

\usepackage{scalerel}\usepackage{setspace}

\usepackage{amsopn}

\RequirePackage[colorlinks,citecolor=blue,urlcolor=blue]{hyperref}

\newtheorem{lemma}{Lemma}
\newtheorem{remark}{Remark}

\newtheorem{definition}{Definition}
\newtheorem{theorem}{Theorem}

\newtheoremstyle{examplestyle}
  {0.5\topsep}  
  {0.5\topsep}   
  {}          
  {}          
  {\bfseries\slshape} 
  {.}         
  { }         
  {}          

\theoremstyle{examplestyle}
\newtheorem{example}{Example}

\newcommand{\cA}{\mathcal{A}}

\newcommand{\dE}{\mathbb {E}}
\newcommand{\dP}{\mathbb{P}}

\newcommand{\dZ}{\mathbb {Z}}
\newcommand{\dN}{\mathbb {N}}

\newcommand{\dR}{\mathbb {R}}
\newcommand{\dC}{\mathbb {C}}
\newcommand{\cF}{\mathcal {F}}

\newcommand{\cB}{\mathcal{B}}
\newcommand{\cG}{\mathcal {G}}

\newcommand{\cP}{\mathcal {P}}

\newcommand{\cH}{\mathcal {H}}

\newcommand{\cZ}{\mathcal {Z}}
\newcommand{\cL}{\mathcal {L}}

\newcommand{\1}{1\!\!{\sf I}}
\newcommand{\IND}{\1}

\newcommand{\cGr}{\mathcal {G}^{\scaleto{\bullet}{3pt}}}
\newcommand{\cGe}{\mathcal {G}^{\scaleto{\bullet}{3pt}\scaleto{\bullet}{3pt}}}

\newcommand{\ee}{\mathrm{e}}

\newcommand{\AUT}{{\mathrm{Aut}}}
\newcommand{\SUPP}{{\mathrm{supp}}}

\newcommand{\CAY}{{\mathrm{Cay}}}

\begin{document}

\title{Logarithmic regularity of spectral measures on infinite graphs}

\author{Charles Bordenave\thanks{CNRS, Aix-Marseille Université, I2M, Marseille, France. Email: charles.bordenave@cnrs.fr}}

\maketitle

\begin{abstract}
We study the regularity of spectral measures of self-adjoint operators on infinite weighted graphs in the unimodular setting. This framework encompasses operators in the group algebra   of a finitely generated group, random operators whose
distribution is quasi-invariant under a group action, and Benjamini--Schramm limits of operators on finite graphs. Under a natural geometric condition on the underlying graph, we prove that the expected spectral measure  satisfies a logarithmic H\"older regularity estimate. The proof relies on a strengthened version of the monotone labelling method previously introduced with Sen and Vir\'ag to control the pure point part of the spectral measure. Applications include operators in
group algebras of indicable groups, Anderson-type models with arbitrary compactly
supported potentials on Cayley graphs, anisotropic percolation operators, and operators
on quasi-transitive graphs. In particular, our results extend the classical Craig--Simon theorem beyond $\mathbb{Z}^d$.
\end{abstract}

\section{Introduction}

In this paper, we are interested in the spectrum of self-adjoint locally defined operators   on infinite graphs. The graph and the operator considered here can be deterministic or random, but they need to satisfy an invariance condition: unimodularity. This framework encompasses all operators in the group algebra $\dC[\Gamma]$ of a finitely generated group $\Gamma$. It contains more generally all random operators  whose distribution is quasi-invariant under the group action. It also includes all operators which are the Benjamini-Schramm limits of a sequence of operators on finite graphs. Under a simple geometric condition,  we notably prove that the expected spectral measure, say $\mu$, satisfies the following regularity  condition: for intervals $I$ in the spectrum, 
$$\mu(I) = O \left( \frac{ 1 }{ \ln ( C / |I|)}\right),$$ 
where $|I|$ the length of the interval. This work is to some extent a sequel of the joint paper \cite{zbMATH06821385} with Sen and Vir\'ag where under the same or similar geometric conditions, the absence/presence of atoms in the expected spectral measure was established. In the present work, we strengthen the monotone labelling tool introduced in \cite{zbMATH06821385} in two ways: we control the expected spectral measure on intervals and not only singletons; we also generalize this tool to deal with graphs with more intricate structures.

In \cite{zbMATH06821385}, the examples of applications focused  on random graphs, random trees and percolation graphs. In the present work, we explore new applications to transitive and quasi-transitive  graphs with possibly random weights. We present in this introduction three applications. Each application  is contained in the next application but to avoid introducing too many notions at the same time, we present them successively (at the expense of some redundancy).

The motivations to study the spectral measures are extremely diverse. For example, in algebraic topology, the study of the spectral measure of elements in the group algebra $\mathbb C[\Gamma]$ has attracted a lot of attention in connection with the theory of $L^2$-invariants and $L^2$-Betti numbers, we refer to the monograph \cite{zbMATH01805723}. In quantum mechanics, Wiener's Lemma implies that for any self-adjoint operator $H$ on a Hilbert space and any vector $f\in D(H)$, we have 
$$
\lim_{T \to \infty}  \frac{1}{T} \int_0^T \left| \langle f, e^{itH}  f \rangle  \right|^2 dt= \sum_{\lambda} \mu_H^f (\{\lambda\})^2,
$$
where $\mu_H^f$ is the spectral measure at vector $f$, see \eqref{eq:defmufP}, and the sum is over the set of its atoms. Hence, the diffusion of quantum particles is intimately related to the spectral decomposition. This is the content of the RAGE Theorem, we refer to \cite[Section 2]{zbMATH06532891}.

\paragraph{Operators in group algebra.}

Let $\Gamma$ be a finitely generated group with unit $\ee$ and $\lambda_{\Gamma}$ be its left-regular representation on $\ell^2 (\Gamma)$ (defined by $\lambda_{\Gamma}(g) f (x)  = f (g^{-1} x)$, for $f \in \ell^2(\Gamma)$, $g,x \in \Gamma$). An element $p = \sum_g p_g  g  \in \dC [\Gamma]$ in the group algebra defines a bounded operator on $\ell^2 (\Gamma)$ through $\lambda_{\Gamma}$ by  
\begin{equation*}
\lambda_{\Gamma} (p) = \sum_g p_g \lambda_{\Gamma}(g).
\end{equation*}
To put it plainly, for any $f \in \ell^2 (\Gamma)$ and $x \in \Gamma$ we have 
\begin{equation}\label{eq:deflambdap}
\lambda_\Gamma (p) f (x) = \sum_{g} p_g f( g^{-1} x).
\end{equation}

Since $\lambda_{\Gamma}$ is faithful, there is no harm in identifying $\lambda_{\Gamma}(p)$ and $p$. In the sequel, we often write $p$ in place of $\lambda_{\Gamma}(p)$.   If $S = S^{-1}$ is a finite symmetric set of generators and  $1_S = \sum_{g \in S} g \in \dC [\Gamma]$, then $\lambda_{\Gamma}(1_S)$ is the adjacency operator of the Cayley graph $\CAY(\Gamma,S)$ of $\Gamma$ with generators $S$ (we allow $\ee \in S$). If $\SUPP(p) \subset S$, we can interpret $p$ as a local operator on $\CAY(\Gamma,S)$. For example if $p_g \geq 0$ and $\sum_g p_g  = 1$ then $p$ is the transition kernel of a random walk on $\CAY(\Gamma,S)$.

If $p = p^*$ where $p^* = \sum_g \bar p_g g^{-1}$ then $p$ is a self-adjoint operator on $\ell^2 (\Gamma)$. From the spectral theorem, we can then decompose $p$ as
$$
p = \int \lambda  dE (\lambda),
$$
where $E$ is the resolution of the identity of $p$. The {\em spectral measure of $p$} (aka Plancherel or Kesten measure) is the probability measure on $\dR$: $\mu_p = \langle \delta_\ee ,  E \delta_\ee \rangle$, where $\delta_g$ is the Dirac delta mass at $g \in \Gamma$. Concretely, for any bounded measurable function $\varphi$, 
$$
\int \varphi (\lambda) d \mu_p (\lambda) = \langle \delta_e , \varphi (p) \delta_\ee \rangle = \int \varphi (\lambda) d \langle \delta_\ee , E(\lambda) \delta_\ee \rangle.
$$

We are interested in the spectral decomposition of $\lambda_\Gamma(p)$ when $p = p^*$. That is, we want to decompose $\mu_p$ in pure point, singular continuous and absolutely continuous parts. Beware that this decomposition may vary in nature across a given group. Indeed, for the lamplighter group $\dZ_2 \wr \dZ$, there are some generating sets $S$ and $S'$ such that $1_{S}$ has purely point spectrum while $1_{S'}$ is purely singular continuous, see \cite{zbMATH01421105,zbMATH06721412}.  If $\Gamma$ admits the free group as a finite index subgroup, then the situation is relatively well-understood: $\mu_p$ has at most finitely many atoms  \cite{Linnell1993} and its continuous part is absolutely continuous (with algebraic density), see \cite{zbMATH05996482,zbMATH07756921} and references therein. If $\Gamma$ admits  $\dZ^d$ as a finite index subgroup, the same situation prevails thanks to Bloch-Floquet theory, see notably \cite{HiguchiShirai1999,KorotyaevSaburova2017} in this direction. In torsion-free groups where the strong Atiyah conjecture holds, $\mu_p$ has no atoms if $p \ne 0$, see  \cite{zbMATH07184957,zbMATH08052380} for important recent results in this direction. Beyond these good cases, it is fair to remark that not much is known, see \cite{10.1007/3-540-16777-3_74,McLaughlin1987,https://doi.org/10.1112/blms/21.3.209,zbMATH00410028,zbMATH00950181} for early references on this topic. In this paper, we make some progress in this direction. For the related question of the connectedness of the spectrum, we refer to \cite{zbMATH00098786,zbMATH07207199} and references therein.

The following definition is central in our applications. Recall that a group $\Gamma$ is indicable if it admits a surjective homomorphism to $(\dZ,+)$.  We specialize this definition as follows.

\begin{definition}\label{def:indicable}
Let $k \geq 1$ integer, $\Gamma$ be a group,  $S = S^{-1 } \subset \Gamma$ a symmetric subset  and an element $a \in S$. We say that $(\Gamma,S)$ is {\em $(a,k)$-indicable} if there exists $\phi \in \mathrm{Hom}( \Gamma, \dZ)$ such that $k = \phi(a)> \phi(b)$ for all $b \in S$, $b \ne a$. 
\end{definition}

Note that if $(\Gamma,S)$ is $a$-indicable then $a$ has necessarily infinite order. If $k =1$, an equivalent definition is the following: $(\Gamma,S)$ is $(a,1)$-indicable if and only if for any integer $n \geq 1$ and any  $w \in S^n$, $w_1\cdots w_n = \ee$ implies that the number of $i$'s such that $w_i = a$ is equal to the number of  $i$'s such that $w_i = a^{-1}$.

As noted in \cite[Remark 1.2]{zbMATH07949207}, for any finitely generated indicable group $\Gamma$, there exists a symmetric set of generators $S$ and $a \in S$ such that $(\Gamma,S)$ is $(a,1)$-indicable. The following classical groups are  indicable: free groups, Artin groups, braid groups,  fundamental groups of Riemann surfaces or Thompson group. Equipped with their natural sets of generators $S$ and picking any $a \in S$, the following classical groups are $(a,1)$-indicable: free groups, even Artin groups or fundamental groups of Riemann surfaces.

Note however that in all these groups, it is possible to find a generating set $S$ such that $(\Gamma,S)$ is not $(a,k)$-indicable for any $a \in S$ and integer $k \geq 1$. Indeed, if $a$ and $b$ in $\Gamma$ do not commute and $S$ contains $a b^{\epsilon}$ and $b^{\epsilon}a$ with $\epsilon \in \{-1,0,1\}$ then  any  $\phi \in \mathrm{Hom}( \Gamma, \dZ)$ satisfies $\phi(a b^{\epsilon} )  = \phi( b^{\epsilon}  a) = \phi (a) + \epsilon \phi(b)$ in particular $\phi$ cannot have a unique maximum among $a b^{\epsilon}$, $b^{\epsilon}a$ and their inverses. We postpone to Subsection \ref{subsec:bonus} the application of our technique to this situation.

Our first result is a refinement of \cite{zbMATH07949207}. Here and below if $I\subset \dR$ is an interval then $|I|$ denotes its Lebesgue measure.  By convention, a singleton $I = \{\lambda\}$ is a closed interval of length $0$. For $x$ real, we set $(x)_+ = \max (x,0)$. The operator norm of an operator $T$ is denoted by $\| T \|_{\rm op}$.

\begin{theorem}\label{th:mainG}
Let $p \in \mathbb C[\Gamma]$ such that $p = p^*$ and let $S  = \SUPP(p)$. If there exists $a \in S$ such that $(\Gamma,S)$ is $(a,k)$-indicable then for any interval $I \subset \sigma(p)$ we have
$$
\mu_p(I) \leq \frac{2k\ln (\alpha)}{\left( \ln \left(  \beta / |I|  \right) \right)_+ },  
$$
where $\alpha = 3 \|p\|_{\rm op} / |p_a|$, $\beta = 2 \sqrt 2  |p_a|$. In particular, $\mu_p$ has no atom.
\end{theorem}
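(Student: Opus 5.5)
The plan is to adapt the monotone labelling argument of \cite{zbMATH06821385,zbMATH07949207} from atoms to intervals. The starting point is the trace/projection representation of $\mu_p(I)$. Let $E_I = E(I)$ be the spectral projection of $p$ on $I$. By invariance under left translations, $\mu_p(I) = \langle \delta_\ee, E_I \delta_\ee\rangle$ is the von Neumann dimension of $\mathrm{Ran}(E_I)$. Fix $\lambda$ the midpoint of $I$ and $\epsilon = |I|/2$. Then every $f \in \mathrm{Ran}(E_I)$ satisfies $\|(p-\lambda)f\| \le \epsilon \|f\|$.

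The indicability gives a level structure. Let $\phi$ be as in Definition~\ref{def:indicable} and slice $\Gamma$ into levels $L_n = \phi^{-1}(n)$. Consider the equation $(p-\lambda) f = r$ evaluated at a vertex $x$. Since $\phi(a) = k$ is the unique maximum of $\phi$ over $S$, the term $p_a f(a^{-1}x)$ is the unique contribution coming from the lowest level $\phi(x)-k$, when written through $f(g^{-1}x)$. Reindex so that $p_a f(y) = r(ay) - \sum_{b\ne a} p_b f(b^{-1}a y) + \lambda f(ay)$. This determines $f$ on level $n$ from $r$ on level $n+k$ and $f$ on levels $n+1,\dots,n+k$. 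The "labelling'' is that $f$ restricted to a window of $k$ consecutive levels, together with $r$, determines $f$ everywhere below. For atoms ($r=0$) this forces the dimension of the eigenspace to be $0$, since a finite-dimensional window of "density zero'' would control everything.

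For intervals, one iterates this recursion $m$ steps. Each step multiplies norms by at most $(\|p\|_{\rm op}+|\lambda|+\ldots)/|p_a| \le \alpha$ (using $|\lambda|\le\|p\|_{\rm op}$ since $I\subset\sigma(p)$). Hence on a block of $mk$ levels, $f$ is determined by its values on $k$ levels up to an error of order $\alpha^{m}\epsilon\|f\|$. Quantitatively, one works in the finite von Neumann algebra $L(\Gamma)$, restricted to the subgroup $\ker\phi$ and the coset decomposition. Compare $\mathrm{Ran}(E_I)$, which has dimension $\mu_p(I)$ per level, with the subspace determined by a $k$-level window. Over $mk$ levels the space $\mathrm{Ran}(E_I)$ has dimension $mk\,\mu_p(I)$. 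It maps almost injectively, with distortion $\alpha^m \epsilon$, into a space of dimension at most $k$ (the window, normalized: dimension $1$ per level). When $\alpha^m\epsilon\cdot \beta^{-1}$-type error is $<1$, an approximate injectivity lemma gives $m\,k\,\mu_p(I)\le 2k$-ish. The lemma says that if $\|Tf - f\|<\|f\|/\sqrt2$ on a subspace, then its dimension is bounded by that of the target, which is where $2\sqrt2$ enters. Thus $\mu_p(I)\lesssim 2k/m$. Choose $m \approx \ln(\beta/|I|)/\ln\alpha$ to get the stated bound, and taking $|I|\to0$ yields no atoms.

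The main obstacle is making the "dimension per level'' comparison rigorous. One has to handle the boundary effects of truncating to finitely many levels and turn the approximate determination into a dimension inequality in the von Neumann sense. I would first try to prove a general lemma of the following form. Let $V$ be a $\ker\phi$-invariant closed subspace whose vectors are approximately determined, up to a factor $<1/\sqrt2$, on $mk$ levels by their restriction to $k$ levels. Then its density is at most $2/m$. To prove it, compose the orthogonal projection of $V$ onto the window coordinates with the recursion map, and use that an operator close to the identity on $V$ is injective there, together with monotonicity of the von Neumann dimension.
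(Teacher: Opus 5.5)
\textbf{There is a genuine gap.} Your recursion is the right engine; it is the paper's prodigy step read downward instead of upward. But the step you defer as the main obstacle is the heart of the proof, and the mechanism you propose for it fails.

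\emph{Why single-window approximate injectivity fails.} On one window of $M$ levels, the recursion gives $\|Q_M f - T\Pi_{\mathrm{win}} f\| \lesssim \alpha^{M}|I|\,\|f\|/p_*$ for $f\in\mathrm{Ran}(E_I)$. The right side carries the \emph{global} norm $\|f\|$, whereas $\|Q_M f\|$ can be arbitrarily small compared to $\|f\|$. So the composite map is not close to the identity on $\overline{Q_M\,\mathrm{Ran}(E_I)}$, and no injectivity follows. This is not a technicality. For $\Gamma=\dZ$ and $p=\delta_1+\delta_{-1}$, the restrictions to $M$ consecutive sites of functions with spectrum in $I$ span all of $\dC^M$, because a nonzero trigonometric polynomial cannot vanish on a set of positive measure. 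The window, however, has dimension $2$. The quantity $M\mu_p(I)$ is a trace, $\mathrm{Tr}_{\ker\phi}(Q_ME_IQ_M)$, not the dimension of a subspace, so a subspace-injectivity argument cannot detect it.

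\emph{Bookkeeping errors.} Since $b=a^{-1}$ gives $\phi(b^{-1}ay)=\phi(y)+2k$, level $n$ is determined by levels $n+1,\dots,n+2k$. The window therefore has $2k$ levels, not $k$. Also, the norm grows by a factor $\alpha$ per level, not per $k$ levels. With your stated rates, $mk\,\mu_p(I)\le 2k$ would give $\mu_p(I)\le 2/m$, which is $k$ times better than the theorem. Your final formula matches only because these slips cancel. The $2k$ in the theorem is the window width. The $2\sqrt2$ in $\beta$ comes from $\kappa/\alpha=p_*\sqrt{\alpha^2-1}\ge 2\sqrt2\,p_*$, not from an injectivity threshold.

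\textbf{The missing idea is to periodize rather than truncate.} The paper sets $\eta(x)=\phi(x)+\omega \bmod n$ with $\omega$ uniform on $\{0,\dots,n-1\}$. The random shift makes the labelling right-invariant. Hence the subspace $\cL\cB$ of vectors supported on the l-bad vertices, which all have labels in $\{0,\dots,2k-1\}$, is invariant, with von Neumann dimension $\dP(\ee\text{ l-bad})\le 2k/n$. Equivalently, you could work deterministically with the index-$n$ subgroup $\phi^{-1}(n\dZ)$.

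No approximate injectivity is then needed. One uses $\dim\cF\le\dim(\cF\cap\cL\cB^\perp)+\dim\cL\cB$ and shows $\cF\cap\cL\cB^\perp=0$ exactly. For $f\in\cF$ vanishing on every seam, the recursion starts from zero data in every period simultaneously. Pythagoras over the labels then gives $\|f\|\le\|f\|\,|I|\alpha^n/\kappa$, with the same global norm on both sides, so $f=0$ once $|I|\alpha^n<\kappa$. Taking $n=\lfloor\ln(\kappa/|I|)/\ln\alpha\rfloor$ gives the statement.

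If you prefer to keep a single window, replace injectivity by a trace estimate. Write $Q_ME_I=A+B$ with $\dim\overline{\mathrm{ran}\,A}\le 2k$ and $\|B\|\le\delta\lesssim\alpha^M|I|/p_*$. This yields $M\mu_p(I)\le 2k+M\delta^2$, which is of the same logarithmic order but with a worse constant than the theorem.
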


This theorem improves on \cite[Theorem 1.1]{zbMATH07949207} in three ways: firstly, $\Gamma$ is allowed to be non-amenable, secondly \cite[Theorem 1.1]{zbMATH07949207} considered only singleton $I = \{\lambda\}$, lastly, in our terminology, \cite{zbMATH07949207} considered only $k=1$. As mentioned above, this theorem applies for example to all adjacency operators of Cayley graphs of even  Artin groups or surface groups with their natural sets of generators. Note that for surface groups \cite{zbMATH07184957} and right-angled Artin groups \cite{zbMATH06050472}, the strong Atiyah conjecture is known to hold and  thus $\mu_p$ has no atom when $p \ne 0$. 

The conclusion of Theorem \ref{th:mainG} cannot be very significantly improved without further assumptions. Indeed, Kotowksi and Vir\`ag have proved that for $\Gamma = \dZ_2 \wr \dZ$ and some operator $p \in \mathbb C [\Gamma]$ satisfying the hypothesis of the theorem, we have 
$\mu_p( (-\epsilon,\epsilon ) ) \sim C / ( \ln \epsilon )^2$ as $\epsilon \to 0$ (see discussion in \cite{zbMATH06721412}).

When $\Gamma$ is indicable but when there does not exist $a \in S$ such that $(\Gamma,S)$ is $(a,k)$ indicable then a weaker version of Theorem \ref{th:mainG} holds. Namely,  in Subsection \ref{subsec:bonus}, we will prove Theorem \ref{th:bonus} which asserts that for any finite symmetric generating set $S$, there exists $p = p^* \in \dC [\Gamma]$ with $\SUPP(p)  = S$ such that the conclusion of Theorem \ref{th:mainG} holds. For the lamplighter group $\Gamma = \dZ_2 \wr \dZ$, this statement will have an interesting consequence, see Remark \ref{rk:GZ}.

\paragraph{Invariant random operators.}

We now generalize Theorem \ref{th:mainG} to random operators in $\cB(\ell^2( \Gamma))$ whose law is invariant by right multiplication. More precisely, fix $S = S^{-1}$ a finite symmetric set of generators (the unit $\ee$ could be in $S$). We consider an array $P = (p_g(x))_{g \in S, x \in \Gamma} \in \dC^{S \times \Gamma}$ which is {\em symmetric} in the sense that for all $x \in \Gamma$ and $g \in S$, 
\begin{equation}\label{eq:sympgx}
p_{g^{-1}} (gx) = \overline p_{g} (x).
\end{equation}
The array $P = (p_{g} (x))_{g \in S, x \in \Gamma} $ defines a symmetric operator through the formula for $f \in \ell^2 (\Gamma)$ with finite support and $x \in \Gamma$,
$$
P f (x) = \sum_{g \in S} p_g (x) f (g^{-1} x ). 
$$
In particular, from \eqref{eq:deflambdap}, if $p_{g} (x)$ does not depend on $x$, we retrieve the previous setting. Under mild assumptions, $P$ extends to a self-adjoint operator. For example if $K = \sup_{x \in \Gamma} \sum_{g \in S} |p_g(x)|< \infty$ then $P$ is a bounded operator on $\ell^2 (\Gamma)$ with $\| P \|_{\rm op} \leq K$. As above, the spectral measure of a vector $f$ is then defined by: for any bounded measurable function $\varphi$, 
\begin{equation}\label{eq:defmufP}
\int \varphi(\lambda) d \mu_P ^{f} (\lambda) = \langle f , \varphi(P) f \rangle.
\end{equation}

We are interested  in random operators, that is when the array $P = (p_{g} (x))_{g \in S, x \in \Gamma}$ is random and its distribution has the following invariance in distribution: for any $y \in \Gamma$,
$$
(p_{g}(x))_{g \in S, x \in \Gamma} \stackrel{d}{=} (p_g(x y))_{g \in S, x \in \Gamma},
$$
where $\stackrel{d}{=}$ is equality in distribution.
We then say that $P$ is a  {\em right-invariant} array. For example, if $(p_g(x))_{g \in S , x \in \Gamma}$ are independent and identically distributed variables   (up to symmetry \eqref{eq:sympgx}) then $P$ is right-invariant.

This setting allows notably to consider the Anderson tight-binding model on $\Gamma$, that is operators of the form:
$$
P = V + \lambda_{\Gamma} ( 1_{S_0} ),
$$
where $S_0 = S_0^{-1}$ is a finite symmetric set of generators of $\Gamma$ with $\ee \notin S_0$, $1_{S_0} = \sum_{g \in S_0} g$ and $V$ is the diagonal operator defined by $V f(x) = V_x f(x)$ with $(V_x)_{x \in \Gamma}$ independent and identically distributed real random variables. Indeed, set $S = S_0 \cup \{\ee \}$, $p_\ee(x) = V_x$ and $p_g (x) = 1$ for $g \in S_0$. 

This setting also contains anisotropic percolation graphs. Let $S = S^{-1}$ be a finite symmetric set of generators of $\Gamma$ and for $g \in S$, let $\pi_g  = \pi_{g^{-1}} \in [0,1]$. The percolation graph of $\CAY(\Gamma,S)$ associated to the probabilities $ (\pi_g)_{g \in S}$ is the graph whose adjacency operator is the above operator $P$ with $(p_g(x))_{g \in S , x \in \Gamma} \in \{0,1\}^{S \times \Gamma}$  independent (up to symmetry \eqref{eq:sympgx}) with $\dP ( p_g(x) = 1) = 1 - \dP ( p_g(x) = 0) = \pi_g$.

There is an extension of Theorem \ref{th:mainG} to right-invariant operators for the {\em expected spectral measure} $\dE [ \mu^{\delta_\ee}_P ]$, which is also known as the density of states in mathematical physics (here and below $\dE$ denotes the expectation with respect to the underlying randomness).

\begin{theorem}\label{th:mainGI}
Let $S = S^{-1}$ be a finite symmetric subset of $\Gamma$ and $P = (p_g(x))_{g \in S, x \in \Gamma}$ be a right-invariant symmetric random array.  Assume that there exist $a \in S$ such that $(\Gamma,S)$ is $(a,k)$-indicable and deterministic constants $K,p_* >0$ such that with probability one, $\| P \|_{\rm op} \leq K$ and for all $x \in \Gamma$, $|p_a(x)| \geq p_{*} $. Then for any interval $I \subset \sigma(P)$ we have
$$
\dE \mu^{\delta_\ee}_P (I) \leq \frac{2k \ln (\alpha)}{ \left( \ln \left(  \beta / |I|  \right) \right)_+},  
$$
where $\alpha = 3 K / p_*$ and $\beta = 2 \sqrt 2   p_*$. 
\end{theorem}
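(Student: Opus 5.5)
Write $\mu=\dE\mu^{\delta_\ee}_P$, let $I$ be an interval of length $\epsilon>0$ centered at $\lambda$, and let $E_I$ be the random spectral projection of $P$ on $I$. The plan is to adapt the monotone labelling argument of Bordenave--Sen--Vir\'ag from singletons to intervals.

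\textbf{Step 1 (trace and near-kernel).} By right-invariance, $\mu(I)=\dE\langle\delta_\ee,E_I\delta_\ee\rangle$ is the trace $\tau(E_I)$ for the natural expected trace on the von Neumann algebra of random right-invariant operators. Equivalently, $\mu(I)$ is the expected ``density'' of the subspace $\mathrm{Ran}(E_I)$. Every vector $f$ in this subspace satisfies the approximate eigenvector bound $\|(P-\lambda)f\|\le \epsilon\|f\|/2$.

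\textbf{Step 2 (labelling from indicability).} Fix $\phi$ with $\phi(a)=k>\phi(b)$ for all $b\neq a$ in $S$. Apply the eigenvalue equation at the vertex $a^{-1}x$ (or at $x$, depending on convention). Among the neighbours of that vertex, $x$ is the unique one with maximal $\phi$-height. Hence
$$
f(x)=\frac{1}{p_a(\cdot)}\Bigl(((P-\lambda)f)(\cdot)-\sum_{b\neq a}p_b(\cdot)f(\cdot)\Bigr),
$$
so $f(x)$ is determined by values of $f$ at strictly lower height, up to the error term $(P-\lambda)f$. The coefficient is controlled because $|p_a|\ge p_*$ and the sum is bounded by $K+|\lambda|\le 2K$.

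Iterate this $n$ times, over $n$ successive heights. The values of $f$ on the layer of heights $[t,t+nk)$ are then determined by the values on a window of height $k$ below $t$, with amplification factor at most $(3K/p_*)^n=\alpha^n$ and an error of order $\alpha^n\epsilon$. Unimodularity, i.e.\ the mass transport principle applied to the cosets of $\ker\phi$, converts this into a dimension count. The ``free'' coordinates have density roughly $k/(\text{window length})$ relative to all coordinates, in the von Neumann dimension sense.

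\textbf{Step 3 (dimension bound via operator norm).} Consider the map $f\mapsto f|_{\text{bottom window}}$ restricted to $\mathrm{Ran}(E_I)$. It is injective up to an error of size about $\alpha^n\epsilon\|f\|$. We choose $n$ so that $\alpha^n\epsilon\le 1/(2\sqrt2)$ up to the constant $p_*$, that is,
$$
n\approx \frac{\ln(\beta/\epsilon)}{\ln\alpha}.
$$
With this choice the map is a near-isometry from $\mathrm{Ran}(E_I)$ into a space of relative dimension about $k/n$ per $\phi$-layer block. The constant $2\sqrt2$ in $\beta$ comes from the requirement that a perturbation of norm $<1/\sqrt2$ keep a compression injective on a subspace, which is the standard argument that $\tau$ of a projection is bounded by $\tau$ of a target projection when $\|(1-Q)E\|<1$. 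This yields
$$
\mu(I)\le \frac{2k}{n}\le \frac{2k\ln\alpha}{\ln(\beta/|I|)},
$$
where the factor $2$ absorbs integer rounding. When $\ln(\beta/|I|)\le 0$ the bound is trivial.

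\textbf{Main obstacle.} The main obstacle is Step 3: turning the ``determined up to small error'' statement into a von Neumann dimension inequality for the random, non-amenable setting. The difficulty is that no Følner sets are available to count dimensions. I would try to define the projection $Q$ onto functions supported on vertices with $\phi(x)\bmod (nk)\in[0,k)$. This $Q$ is a right-invariant random projection with trace $1/n$, computed using $\phi$ and right-invariance. One then shows
$$
\|(1-Q)E_I g\|\le \alpha^n\epsilon\,\|E_I g\|/p_*\text{-scaled}+\ldots
$$
To do this, write $(1-Q)f$ as a bounded triangular (in $\phi$-height) operator applied to $Qf$ plus $(P-\lambda)f$. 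The standard inequality $\tau(E)\le \tau(Q)/(1-\|(1-Q)E\|^2)$ then finishes the proof, giving the factor $2$ when $\|(1-Q)E\|^2\le 1/2$.
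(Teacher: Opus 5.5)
Your architecture (a small-trace window projection, the recursion through the eigenvalue equation at $a^{-1}x$, and a von Neumann dimension count) matches the paper's, which runs the monotone labelling $\eta=\phi+\omega \bmod n$ through Theorem~\ref{th:mono}. However, Steps 2--3 as written do not go through.

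First, your window is too thin. The neighbours of $\hat x=a^{-1}x$ other than $x$ have heights in $[\phi(x)-2k,\phi(x)-1]$; the neighbour $a^{-2}x$ sits at height $\phi(x)-2k$. So the recursion needs $2k$ consecutive free heights. It also advances one height per step, so a period of $N$ heights costs $\alpha^{N}$, not $\alpha^{N/k}$. With your $Q$ (heights in $[0,k)$ mod $nk$), $(1-Q)f$ is not determined by $Qf$. Take $\Gamma=\dZ$, $S=\{\pm 1\}$, $a=1$, $k=1$, $P$ the adjacency operator and $I$ centred at $0$. The vector $f(x)=\sin(\pi x/2)h(x)$, with $\hat h$ supported near $0$, lies in $\mathrm{Ran}\,E_I$ and vanishes at every even $x$, hence on your window when $nk$ is even (use $\cos$ for an odd offset).

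Second, $Q$ as defined is deterministic and not right-invariant: its diagonal entry at $\ee$ is $1$. You must shift the heights by a uniform offset independent of $P$, as the paper does with $\omega$. Then the labelled rooted graph is unimodular, and $\tau(Q)$ is the probability that the root lies in the window, i.e.\ $2k/N$ with the correct window.

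Third, and most seriously, the comparison step fails. For $f\in\mathrm{Ran}\,E_I$ the triangular recursion gives only $\|(1-Q)f\|\lesssim\alpha^N(\|Qf\|+|I|\,\|f\|)$, with constants depending on $K,p_*$. The term $\alpha^N\|Qf\|$ is not small, so you cannot reach $\|(1-Q)E_I\|^2\le 1/2$. In fact this bound is false in general. For a wave packet $f(x)=e^{i\pi x/2}h(x)$ on $\dZ$ with $h$ slowly varying, $\|Qf\|^2\approx\tau(Q)\|f\|^2$. So $\|(1-Q)E_I\|$ is close to $1$, and restriction to the window is far from a near-isometry.

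What the recursion does give is injectivity: if $Qf=0$ then $\|f\|\le|I|\alpha^N\|f\|/\kappa$, so $f=0$ once $|I|<\kappa\alpha^{-N}$. Injectivity alone yields $\tau(E_I)\le\tau(Q)$ with no loss, by the dimension formula \eqref{eq:dimfor} applied to $\cF=\mathrm{Ran}\,E_I$ and $\mathrm{Ran}(1-Q)$. This is exactly how Theorem~\ref{th:mono} concludes. Consequently the $2$ in $2k$ is the window height, not rounding or a factor $1/(1-\|(1-Q)E_I\|^2)$. Likewise $\beta=2\sqrt2\,p_*$ is just $6\sqrt2K/\alpha\le\kappa/\alpha$, where $\kappa=3K\sqrt{\alpha^2-1}$ comes from the Pythagoras sum over lower levels, not from a $1/\sqrt2$ perturbation threshold. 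With these three corrections and $N=\lfloor\ln(\kappa/|I|)/\ln\alpha\rfloor$, your argument becomes the paper's.
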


Interestingly, the claim of the theorem does not depend on the weights $p_{g}(x)$ with $g \notin \{a,a^{-1}\}$. For example, it could be applied to the Anderson tight-binding model for any potential $V_x$ whose distribution is compact. Our result can then be seen as an extension of the classical Craig-Simon theorem \cite{zbMATH03844760} (where in contrast with the classical Wegner estimates, there is no density assumption or independence that is assumed on the potential $V$, see \cite{zbMATH06532891}).  It can also be applied to anisotropic percolation with any probabilities $(\pi_g)_{g \in S}$ such that $\pi_a = \pi_{a^{-1}} = 1$. We then obtain a uniform bound (over the values of $(\pi_g)_{g \in S \backslash \{a^{-1}, a\}}$) on the logarithmic regularity. This surprising phenomenon had already been observed on $\dZ^d$ and for the absence of atoms in  \cite{zbMATH06821385} (under the name of vertical percolation).

Again, the statement of Theorem \ref{th:mainGI} is  not too far from optimal for this set of assumptions. For the Anderson tight-binding model $P$ on $\dZ$, it was known to Dyson that $\dE \mu^{\delta_\ee}_P( (-\epsilon,\epsilon)) \sim C /(\ln \epsilon)^2$ as $ \epsilon \to 0$ under mild assumption on the distribution of $V_x$, we refer again to \cite{zbMATH06721412}.

\paragraph{Quasi-transitive graphs.}

We give an illustration for (possibly random) operators on  quasi-transitive graphs.  Let $V$ be a finite set, $\Gamma$ be a finitely generated group as above and $S = S^{-1}$ be a finite symmetric subset of $\Gamma$. We now consider operators acting on $ \dC^V \otimes \ell^{2} (\Gamma)$. A vector $f \in \dC^V  \otimes \ell^{2} (\Gamma)$ can be written as $f = (f(x))_{x \in \Gamma}$ where $f(x) \in \dC^V $ is the orthogonal projection onto the vector space supported by vectors with support in $V \times \{x \}$. For each $x \in \Gamma$ and $g \in S$, let $p_g (x ) \in M_V (\dC)$ be such that 
$$
p_{g^{-1}} (g x) = p_g (x)^*,
$$
where $*$ is the conjugate transpose of a matrix. The array $P = (p_g(x))_{g \in S, x \in \Gamma} \in M_V(\dC)^{S \times \Gamma}$ defines a symmetric operator through the formula: for any $f \in \dC^V \otimes \ell^2 (\Gamma)$ with finite support and $x \in \Gamma$,
$$
(P f ) (x)  = \sum_{g \in S} p_g (x) f (g^{-1} x ),
$$
(this is an identity in $\dC^V$). Again, under mild assumptions, $P$ extends to a self-adjoint operator, for example if $\| P \|_{\rm op} \leq \sup_{x \in \Gamma} \sum_{g \in S} \| p_g (x) \|_{\rm op}$ is finite.  If $p_g(x) = p_g$ does not depend on $x$, then $P$ can simply be written as 
\begin{equation}\label{eq:defPlift}
P = \sum_{g \in S} p_g \otimes \lambda_{\Gamma} (g).
\end{equation}
This type of quasi-transitive operators  with respect to the right-action of $\Gamma$ is ubiquitous, see for example \cite{zbMATH01421105,zbMATH05177852,9317918,bordenave2025sparsegraphsbenjaminischrammlimits} and references therein. Also, elements of the group algebra of a direct or semi-direct products $G \ltimes \Gamma$ where $G$ is a finite group can be written in the form \eqref{eq:defPlift}.

More generally, when $P$ is random, we say that the array $P = (p_g(x))_{g \in S, x \in \Gamma} \in M_V(\dC)^{S \times \Gamma}$ is right-invariant, if for any $y \in \Gamma$, $P \stackrel{d}{=}(p_g(xy))_{g \in S, x \in \Gamma}$.

The last result of this introduction allows to control by induction on $S$ and $V$, the atoms of the expected spectral measure of $P$ defined as
$$
\dE [ \mu_{P}^{\delta_{\ee o}}] = \frac{1}{|V|} \sum_{v  \in V} \dE [ \mu_P^{\delta_{\ee v}} ], 
$$
where $\delta_{\ee v} = \delta_{\ee} \otimes \delta_u$ and $o$ is uniformly distributed on $V$. If $P$ has the form \eqref{eq:defPlift}, the $L^2$-Betti numbers can be expressed as the mass at $0$ of such spectral measure, see \cite{zbMATH01805723,zbMATH05177852}.

 Below, if $T \in M_V(\dC)$ and $U_1,U_2 \subset V$, $T_{|U_1 \times U_2} \in M_{U_1,U_2} (\dC)$ is the submatrix  with indices in $U_1 \times U_2$. If $U = U_1 = U_2$, we simply write $T_{|U}$ in place of $T_{|U \times U}$. Also, for $\gamma > 0$, we say that a measure $\mu$ on $\dR$ is {\em $\gamma$-log H\"older} if there are constants $c_0,c_1 >0$ such that for all intervals $I \subset \dR$, $\mu(I) \leq c_0 (\ln (c_1/|I|))_+^{-\gamma}$.

\begin{theorem}\label{th:mainGQI}
Let $V$ be a finite set, $S = S^{-1}$ be a finite symmetric subset of $\Gamma$ and  assume that there exists $a \in S$ such that $(\Gamma,S)$ is $(a,k)$-indicable for some $\phi \in \mathrm{Hom} (\Gamma,\dZ)$. Let $P = (p_g(x))_{g \in S, x \in \Gamma}$ be a right-invariant symmetric random array in  $ M_V(\dC)^{S \times \Gamma}$.  Assume that there is a partition $V = V_0 \sqcup V_1$ ($V_1$ can be empty) and deterministic constants $K,p_* >0$ such that with probability one, $\| P \|_{\rm op} \leq K$, $\| { p_a(x)_{| V_0} }^{-1} \|_{\rm op} \leq p^{-1}_{*} $ and   $p_{g}(x) _{|V  \times V_1} = 0$ for all $x \in \Gamma$ and $g \in S$ with $\phi(g) > 0$. Then for any real $\lambda$ we have
$$
\sum_{v\in V} \dE \mu^{\delta_{\ee v}}_P (\{\lambda\}) \leq  \sum_{v \in V_1}  \dE \mu^{\delta_{\ee v}}_{P_1} (\{\lambda \}),
$$
where $P_1 = (p_g (x)_{|V_1})_{g \in S_1, x \in \Gamma} \in M_{V_1}(\dC)$ and $S_1 = S \backslash \{a^{-1} , a \}$ (with the convention that the sum over an empty set is $0$). Moreover, for any $ 0 < \gamma_1 \leq 1$, if the continuous part of  $\sum_{v \in V_1}  \dE \mu^{\delta_{\ee v}}_{P_1}$ is $\gamma_1$-log Hölder,  then the continuous part of $\sum_{v\in V} \dE \mu^{\delta_{\ee v}}_P$ is $\gamma$-log Hölder for any $\gamma < \gamma_1 / (1 + \gamma_1)$. 
\end{theorem}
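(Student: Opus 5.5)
The plan is to run the monotone labelling argument of \cite{zbMATH06821385}, in the strengthened form announced in the introduction, on the graph with vertex set $V \times \Gamma$, using $\phi$ to orient it.

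\textbf{Labelling.} Write $\Gamma_t = \phi^{-1}(t)$ for the level sets; they are right-translates of $\ker\phi$. Every vertex $(v,x)$ in $V_0 \times \Gamma$ is labelled as determined by $(v,ax)$. The idea is that the eigenvalue equation $(P-\lambda)f = 0$ at the vertex $(V_0, ax)$ reads
$$
p_a(ax)^{*}_{|V_0} f(x)_{|V_0} + \sum_{g \neq a^{-1}} (\cdots) = \lambda f(ax)_{|V_0}.
$$
Since $\phi(a) = k$ is the unique maximum of $\phi$ on $S$, the neighbour $x = a^{-1}(ax)$ is the unique one at the lowest level $\phi(ax) - k$. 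Inverting $p_a(ax)^{*}_{|V_0}$ then expresses $f(x)_{|V_0}$ through values at levels strictly higher than $\phi(x)$, and through $V_1$-coordinates of $f$ at level $\phi(x)$ reached via $g \in S_1$.

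The hypothesis $p_g(x)_{|V\times V_1} = 0$ for $\phi(g) > 0$ says that $V_1$-coordinates never feed into upward edges. So for an eigenvector, the $V_1$-part restricted to a level set is governed by $P_1$ together with lower levels. Concretely, I would show that $\dim\ker(P-\lambda)$, normalized by the von Neumann trace (the unimodular/mass-transport setting), is at most the normalized dimension of the space of functions on the $V_1$-labelled vertices that can be freely prescribed. Those free coordinates form eigenvectors of $P_1$ up to the determined part, which gives the atom inequality.

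\textbf{Trace computation.} The counting is made rigorous as in \cite{zbMATH06821385}. Truncate to the slab $\{0 \le \phi \le n\}$. The $V_0$-values on the top $k$ levels are free, which costs a boundary term of relative size $O(k/n)$ via invariance along $\phi$. Everything else is determined by free data, and mass transport (right-invariance) converts densities into $\sum_v \dE\mu^{\delta_{\ee v}}(\{\lambda\})$. Letting $n\to\infty$ gives
$$
\sum_{v\in V}\dE\mu^{\delta_{\ee v}}_P(\{\lambda\}) \le \sum_{v\in V_1}\dE\mu^{\delta_{\ee v}}_{P_1}(\{\lambda\}).
$$

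\textbf{Intervals.} For the log-Hölder part, replace the kernel by the spectral subspace $E_P(I)$ and run an approximate version of the same argument. For $f \in \mathrm{Ran}\,E_P(I)$ we have $\|(P-\lambda)f\| \le |I|\,\|f\|$. Each propagation step down one level then multiplies errors by at most $\alpha \asymp K/p_*$. Running the determination over $m$ levels gives
$$
\dE\mu_P(I) \lesssim \frac{k}{m} + \text{($V_1$ contribution)},
$$
with error $\alpha^m|I|$. The $V_1$ contribution is controlled by $\dE\mu_{P_1}$ of an interval of length about $\alpha^m|I|$, plus its atoms, which are handled by the first part.

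Using the hypothesis, the continuous part satisfies
$$
\mu_P(I) \lesssim \frac{1}{m} + \bigl(\ln(c/(\alpha^m|I|))\bigr)^{-\gamma_1}.
$$
Choosing $m \approx \theta\ln(1/|I|)/\ln\alpha$ makes the second term of order $(\ln(1/|I|))^{-\gamma_1}$. That alone would give exponent $\min(1,\gamma_1)$.

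The weaker exponent $\gamma_1/(1+\gamma_1)$ suggests the approximate-kernel argument loses more than this. Presumably the projection onto approximate eigenvectors must be split at a threshold $\epsilon$, with $P_1$-mass controlled at scale $\epsilon^{-1}\alpha^m|I|$ and a Chebyshev loss of order $\epsilon$. Optimizing $1/m + \epsilon + (m\ln\alpha)^{-\gamma_1}$-type terms should produce exponents $\gamma<\gamma_1/(1+\gamma_1)$.

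\textbf{Main obstacle.} I expect the hardest step to be this quantitative propagation. Near-eigenvectors do not satisfy exact linear relations, so the rank argument must be replaced by a norm/trace inequality, such as $\mathrm{tr}\,E_P(I) \le \mathrm{tr}(\text{projection onto free data}) + \text{error}$. The subtle points are controlling the coupling between the $V_1$-block and $P_1$-spectral projections, and keeping everything measurable and invariant.
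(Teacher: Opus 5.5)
Your interval (log-H\"older) step has a genuine gap. You assert that the $V_1$ contribution is controlled by $\dE\mu_{P_1}$ on an interval of length about $\alpha^m|I|$. Nothing in your argument justifies this, and the propagation does not give it. This is also why your heuristic produces the exponent $\gamma_1$, which is better than what the theorem claims.

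For near-eigenvectors you cannot treat the $V_1$-coordinates as free data fixed once and for all. At each label $j$ you can only discard the spectral subspace $\cF_j$ of the level operator $G_j$ (a copy of $P_1$ on a level set) for some window $I_j$ centred at $\lambda$. For $f\in\cF\cap\cB^\perp\cap\bigcap_j\cF_j^\perp$ you then get $\tfrac12|I_j|\,\|f^o_j\|\le\|(A_{G_j}-\lambda)f^o_j\|$. The right-hand side is at most the residual $|I|\,\|f\|$ plus $2K(\|f^+_j\|+\|f_{<j}\|)$. So every label costs a factor $C/|I_j|$ on top of the prodigy factor, and these factors compound. Closing the induction requires roughly $|I|\prod_{j<n}(C/|I_j|)\ll 1$.

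This is why the paper takes $|I_j|=|I|^{s_j}$ with $s_j=c_\theta (j+1)^{-\theta}$, $\sum_j s_j\le 1/2$ and $|I|\le\beta^{-n}$ (second part of Lemma~\ref{th:monof}, Theorem~\ref{th:monoP}). The $P_1$-windows then have length $|I|^{c/n^\theta}$, not $\alpha^n|I|$. Inserting the $\gamma_1$-log H\"older hypothesis gives $\dE\mu_P(I)\lesssim 2k/n+\bigl(n^\theta/\ln(1/|I|)\bigr)^{\gamma_1}$. Choosing $n\approx(\ln(1/|I|))^{\delta}$ with $\delta=\gamma_1/(1+\theta\gamma_1)$ balances the two terms and yields the exponent $\gamma_1/(1+\theta\gamma_1)\to\gamma_1/(1+\gamma_1)$ as $\theta\downarrow 1$. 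Your threshold/Chebyshev speculation does not recover this: with the scales you propose, optimizing $1/m+\epsilon+(\cdots)^{-\gamma_1}$ still gives $\gamma_1$. The missing idea is precisely this per-label spectral gap with summable exponents.

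Your sweep is also oriented the wrong way, which affects the atom part as well. You recover $f(x)_{|V_0}$ from the equation at $ax$, i.e.\ from higher levels. But the hypothesis $p_g(x)_{|V\times V_1}=0$ for $\phi(g)>0$, together with the symmetry $p_{g^{-1}}(gx)=p_g(x)^*$, means a $V_1$-vertex at level $t$ is adjacent only to levels $\le t$. In a top-down sweep its coordinates enter none of the equations you use. Its own equation involves lower levels not yet controlled. The count therefore degenerates to the trivial bound $|V_1|$ instead of $\sum_{v\in V_1}\dE\mu^{\delta_{\ee v}}_{P_1}(\{\lambda\})$.

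The paper orients bottom-up. A block $(x,0)$ with label $j\ge 2k$ is a prodigy with parent $(a^{-1}x,0)$, all of whose other neighbours have label $<j$. A block $(x,1)$ with label $\ge k$ is level. At label $j$ one first obtains $f^+_j$ from the parent's equation. The $V_1$-equation then reads $(A_{G_j}-\lambda)f^o_j=$ residual plus terms at labels $\le j$ that are already bounded.

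Finally, you truncate to a slab, which is delicate for non-amenable $\Gamma$ since slabs are infinite. The paper instead uses the random shift $\eta=\phi+\omega \bmod n$. This is an invariant block labelling whose bad set has density at most $2k/n$, so the von Neumann dimension inequality $\dim\cF\le\dim\cF'+\dim\cB+\sum_j\dim\cF_j$ applies directly. With $I=\{\lambda\}$ and $n\to\infty$ this gives the atom inequality.
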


 This result can be applied to many quasi-transitive operators.
 Let us give for simplicity some deterministic examples with $p_g (x) = p_g = p^*_{g^{-1}}$ independent of $x$ as in \eqref{eq:defPlift}. The obvious example is when $(\Gamma,S)$ is $(a,k)$-indicable and $p_a$ is invertible: we can then directly apply the theorem to $V_0 = V$.

A simple example with $V_1$ not empty is the following: assume, in addition to the assumptions of the theorem, that $\ee \in S$ and ${p_g}_{|V_1} = 0$ for all $g \in S \backslash \{\ee\}$. Then $P_1 = p_1  \otimes 1_{\Gamma}$, where $p_1 = {p_{\ee}}_{|V_1}$ and $1_\Gamma$ is the identity operator on $\ell^2(\Gamma)$.  Let 
$$\mu_{ p_1  } = \frac{ 1}{|V_1|} \sum_{i = 1}^{|V_1|} \delta_{\lambda_i (p_1 )} $$
be the empirical distribution of the eigenvalues of $p_1$ (counting multiplicities). From the spectral theorem, we have 
$$
\mu_{p_1  }  = \frac{1}{|V_1|} \sum_{v \in V_1} \mu_{p_1}^{\delta_v}.
$$
Hence, from Theorem \ref{th:mainGQI}, for any real $\lambda$, we get
$$
\dE \mu_{P}^{\delta_{\ee o}} (\{\lambda \}) = \frac{1}{|V|} \sum_{v\in V}  \mu^{\delta_{\ee v}}_P (\{\lambda \}) \leq  \frac{|V_1|}{|V|} \mu_{p_1} (\{\lambda \}) .
$$
In particular, the set of atoms  is finite and contained in the set of eigenvalues of $p_1$. Moreover, for any $v \in V$, the continuous part of $\dE \mu_{P}^{\delta_{\ee v}}$ is $\gamma$-log Hölder for any $\gamma < 1/2$.

Our last example uses successive applications of Theorem \ref{th:mainGQI}. Assume that we have a partition $V = \sqcup_{i=1}^k V^i_0$ with $|V^i_0| = d_i \geq 1$, and assume that there exist distinct elements $\{a_i,a_i^{-1}\}$ in $S$ such that for any $i$, $(\Gamma,S)$ is $(a_i,1)$-indicable and $p_{a_i}$ is zero except on the $V^i_0 \times V^i_0$-diagonal block where it is equal to $q_i \in M_{d_i}(\dC)$ with $\| q_i^{-1} \|_{\rm op} \leq p^{-1}_*$. We can then iteratively apply Theorem \ref{th:mainGQI} for $j = 1 , \ldots , k$ to $ V^j  = \sqcup_{i \geq j} V^i_0$, $V_0^j$ and $V_1^j = \sqcup_{i > j} V^i_0$. We get for any $v \in V$ that $\mu^{\delta_{\ee v}}_P$ is continous and $\gamma$-log Hölder for any $\gamma < \psi^k(1)$, where $\psi(x) = x / (1+x)$. 

\paragraph{Proof overview. } At a high level, the existence of a surjective homomorphism $\phi$ from $\Gamma$ to $\dZ$ allows to decompose $\ell^2(\Gamma)$ along the values of $\phi(x)$, $x \in \Gamma$. We can then study the eigenvalue equation $(P - \lambda)f = 0$ iteratively along this decomposition, the idea being that  $f(x)$ is determined by the values of $f(y)$ with $\phi(y) < \phi(x)$. This strategy works nicely in finite dimension and beyond groups in the more general setting of what we called monotone labelling in \cite{zbMATH06821385}, see Lemma \ref{th:monof}. In infinite dimension, to make sense of this decomposition, we prove that such decomposition can be made inside a von Neumann algebra of finite type. The existence of a trace on this algebra will be guaranteed by the invariance properties of the operators that we consider. Then, using the von Neumann dimension associated to this trace, the argument stated in finite dimension can be made effective also in infinite dimension. This will lead to our main results on logarithmic regularity of spectral measures, Theorem \ref{th:mono} and Theorem \ref{th:monoP}.

\paragraph{Some perspectives. } Our basic approach in finite dimension, Lemma \ref{th:monof}, is rather rudimentary. It would be extremely interesting to identify deeper geometric structures to study eigenvalue equations which might take advantage of richer structures such as homomorphisms onto $\dZ^d$ or free groups. More generally, there is a lack of new methods which could establish absolutely continuous spectrum.

It would be also very natural to explore similar strategies for periodic operators on a Riemannian manifold with a properly discontinuous, cocompact and isometric $\Gamma$-action. Such operators are notably studied in  \cite{zbMATH00098786}. What appears to be a first difficulty to extend our tools in this setting is that the corresponding von Neumann algebra is not of finite type.

\paragraph{Organization of the paper. }

In Section \ref{sec:unimod}, we present the general framework of unimodular random graphs and their local operators. In Section \ref{sec:mono}, we state and prove our main result on the regularity of the spectral measure. In Section \ref{sec:group} we apply this tool and prove the theorems stated above.

\paragraph{Acknowledgments. }  

This work would never have happened without enlightening and stimulating discussions with Peter Sarnak and Cyril Letrouit. It was done during the author's stay at the Institute for Advanced Study. He is grateful to the Institute for its unique environment. The author acknowledges support from the James D. Wolfensohn Fund and grant ANR-25-CE40-5672.

\section{Unimodular graphs and local operators}\label{sec:unimod}

In this section, we introduce some well-known elements of the theory of unimodular random rooted graphs. To avoid reproducing previous texts, we have chosen to present here this theory in a restricted setting tailored for the models mentioned in the introduction. This avoids some technicalities and helps to convey some key ideas. This might also serve some pedagogical purposes. We refer to \cite{MR2354165,10.1214/17-EJP73,MR3933204} for the general theory. 

\paragraph{Marked Graph. }

A {\em graph} $G = (V,E)$ will be a pair formed by a countable vertex set $V$ and a countable edge set $E$. An edge $e = (u,v) \in E$ is an ordered pair of vertices. The graphs here are undirected meaning that $E$ is equipped with the involution $e \mapsto e^{-1}$ such that if $e = (u,v)$, $e^{-1} = (v,u)$.  For ease of notation, for $u,v \in V$, we write $u \sim v$, if  $e = (u,v) \in E$. Our graphs may have  self-loops, that is, edges such that $e = e^{-1} = (u,u)$.  
The degree $\deg(v)$ of $v \in V$ is the number of edges of the form $e = (v,u) \in E$. We say that $G$ has {\em bounded degree} if $\sup_{v \in V} \deg(v) < \infty$. 
A graph is {\em connected} if there is a path connecting any pair of vertices.

Let $\cZ$ be a set equipped with an involution denoted by $*$. A {\em marked graph} $G = (V,E,\xi)$ is a  graph $(V,E)$ and a map $\xi : E \to \cZ$ satisfying the symmetry relation: for all $e \in E$, 
\begin{equation}\label{eq:symme2} \xi(e^{-1}) = \xi (e)^*.
\end{equation}

We will call the unmarked graph $\bar G = (V,E)$ the {\em skeleton} of $G$. A {\em rooted marked graph} $g = (G,o)$ is the pair formed by a connected marked graph and a distinguished vertex $o \in V$ called the root. Similarly, an {\em edge-rooted marked graph} $(G,e)$ is a connected marked graph $G$ with a distinguished oriented pair $e = (u,v) \in V^2$ (we do not necessarily assume that $e \in E$).

\begin{example}[Theorem \ref{th:mainG}]\label{ex:1}
In the setting of Theorem \ref{th:mainG}, a Cayley graph $ G_0 = \CAY(\Gamma,S)$ of a finitely generated group $\Gamma$ with finite symmetric generated set $S$ defines a graph with vertex set $\Gamma$ and edge set $E  =\{ (x,gx) : x \in \Gamma, g \in S \} \simeq \Gamma \times S$. If $p \in \dC[\Gamma]$ with $p = p^*$ and $\SUPP(p) \subseteq S$, then $G_1 = (G_0,\xi)$ defines a marked graph on the mark space $\cZ = \dC$ equipped with the involution $z^* = \bar z$ where the edge $(x,gx)$ receives the mark $\xi(x,gx) = p_{g^{-1}}$.
\end{example}

\begin{example}[Theorem \ref{th:mainGI}]\label{ex:2}
 In the setting of Theorem \ref{th:mainGI}, the symmetric array $P = (p_g(x))_{g \in S,x \in \Gamma} \in \dC^{S \times \Gamma}$ defines a marked graph $G_2$ on the mark space $\cZ =  \dC$. The edge $ (x , gx)$ receives the mark $p_{g^{-1}}(x)$. 
\end{example}

\begin{example}[Theorem \ref{th:mainGQI}] \label{ex:3} In the setting of Theorem \ref{th:mainGQI}, the symmetric array $P = (p_g(x))_{g \in S,x \in \Gamma} \in M_V(\dC)^{S \times \Gamma}$ defines a marked graph $G_3$ on the vertex set $V \times \Gamma$ where the edge $((x,u),(gx,v))$ has mark $p_{g^{-1}}(x) (u,v) \in \dC$.
\end{example}

\paragraph{Weighted adjacency operator. } In the next section, we will study (weighted) adjacency operators.  A marked graph $G = (V,E,\xi)$ on the  mark space $\cZ = \dC$ (with  $z^* = \bar z$) will be called a {\em weighted graph}. We define its adjacency operator by the formula, for all compactly supported $f \in \ell^2(V)$,
\begin{equation}\label{eq:defA}
A_G f (u) = \sum_{v \sim u} \xi(u,v) f(v).
\end{equation}
From \eqref{eq:symme2}, $A_G$ is symmetric. Hence if $\| A_G \|_{\rm op} < \infty$, it defines a bounded self-adjoint operator on $\ell^2(V)$. We denote the spectral measure of $A_G$ at vector $f \in \ell^2(V)$ as $\mu_G^f$, that is for any bounded measurable function $\varphi$, 
\begin{equation}\label{eq:defmuG}
\int \varphi(\lambda) d\mu_G^f( \lambda ) = \langle f , \varphi( A_G ) f \rangle .
\end{equation}

For example, the marked graphs $G_1$, $G_2$ and $G_3$ of Examples \ref{ex:1}, \ref{ex:2} and \ref{ex:3} are weighted graphs. The operators $\lambda_{\Gamma}(p)$ and $P$ mentioned in Theorem \ref{th:mainG}, \ref{th:mainGI} and \ref{th:mainGQI} are the adjacency operators $A_{G_1}$, $A_{G_2}$ and $A_{G_3}$. We explain below that the adjacency operators belongs to a unital $*$-algebra of operators. But, before that, we  first  define the local topology and unimodularity.

\paragraph{Local topology. } 
For simplicity of exposition, we fix some connected bounded degree graph $G_0 = (V,E)$  and a complete separable mark space $\cZ$. We let $\cG = \cG(G_0,\cZ)$ be the set of marked graphs of $G = (V,E,\xi)$  with skeleton $\bar G = G_0 = (V,E)$. Similarly, we let $\cGr = \cGr(G_0,\cZ)$ and  $\cGe = \cGe(G_0,\cZ)$ denote the set of rooted (respectively edge-rooted) marked graphs $(G,o)$ and $(G,e)$ such that $G \in \cG$, $o \in V$ and $e \in V^2$ ($e$ is not necessarily an edge of $G_0$).

Fix an increasing connected exhausting sequence $(V_k)_{k \in \dN}$ of $V$. If $ (G,o)$ is a rooted marked graph, we  define $(G,o)_k$ as the rooted marked graph spanned by vertices in $V_k$ (with the convention that $(G,o)_k$ is an empty graph if $o \notin V_k$). This family of maps $(G,o) \mapsto (G,o)_k$, $k \in \dN$, defines a projective system. The local topology on $\cGr$ is the product topology inherited from these projections.  With this topology, $\cGr$ is a complete separable metric space. 
  The local topologies on $\cG$ and $\cGe$ are defined analogously. 

We denote by $\cP(\cGr)$ the set of probability measure on $\cGr$ equipped with the weak topology. A measure $\rho \in \cP(\cGr)$ defines a random rooted marked graph $(G,o)$. The expectation of an integrable function $f$ on $\cGr$ with respect to $\rho$ will be denoted by 
$$\dE_\rho f (G,o) = \int_{\cGr} f (G,o) d \rho (G,o).$$

\paragraph{Unimodularity.}
We fix some subgroup $\AUT_0$ of the automorphism group of $G_0$. We say that a measurable function $f : \cGe \to \dR$ is {\em invariant} if for any $\phi \in \AUT_0$ and $(G,u,v) \in \cGe$, we have $f(\phi(G),\phi(u),\phi(v)) = f(G,u,v)$. A probability measure $\rho \in \cP(\cGr)$ is {\em unimodular} if for every non-negative invariant function $f$ on $\cGe$, we have  
\begin{equation}
\label{eq:defunimod} 
\dE_{\rho} \sum_{v \in V} f(G,o,v) = \dE_{\rho} \sum_{v \in V} f(G,v,o).
\end{equation}

By extension, we say that a random rooted marked graph $(G,o)$  is unimodular if its distribution is unimodular.  The basic example is the following. Let $G \in \cG$, $\AUT_0$ be the trivial subgroup (only the identity) and assume that $V$ is finite. If $o$ is uniformly distributed on $V$ then $(G,o)$ is unimodular. Indeed for any function $f \in \cGe$,
 \begin{eqnarray*}
|V| \cdot  \dE  \sum_{v \in V } f (G,o,v)  =  \dE  \hspace{-3pt} \sum_{ (u,v) \in  V^2 } f (G,u,v)   =  |V| \cdot  \dE  \sum_{u \in V } f (G,u,o) .
 \end{eqnarray*}
This observation is particularly important as it implies that  Benjamini-Schramm limits of finite graphs are unimodular, for precisions see \cite{MR2354165,zbMATH06902684,bordenave2025sparsegraphsbenjaminischrammlimits}.

\setcounter{example}{0}

\begin{example}[Continued]  We continue with the notation of Example \ref{ex:1}.
In the setting of Theorem \ref{th:mainG}, the marked graph $ G_1$ defines a  unimodular measure $\rho = \delta_{(G_1,\ee)}$.  Indeed, let $\AUT_0$ be the right multiplication by elements of $\Gamma$ and $G_0 =  \CAY(\Gamma,S)$. Since $G_1$ is invariant by right multiplication, for any invariant function $f$ and $x,y,g \in \Gamma$, we have $f (G_1,x,y) = f (G_1,xg,yg) $. We find:
$$
\sum_{x \in \Gamma} f (G_1,\ee,x) = \sum_{x \in \Gamma} f (G_1,x^{-1},\ee) = \sum_{x \in \Gamma} f (G_1,x,\ee).
$$ 
The above computation can also be directly obtained from the unimodularity (in the measured group theory sense) of the counting measure $\sum_{x \in \Gamma} \delta_{x}$, see \cite{MR2354165}.
\end{example}

\begin{example}[Continued]
Let $\AUT_0$ and $G_0  = \CAY(\Gamma,S)$ be as in the previous example. In the setting of Theorem \ref{th:mainGI}, the right-invariant symmetric array $P = (p_g(x))_{g \in S,x \in \Gamma} \in \dC^{S \times \Gamma}$ defines a unimodular rooted marked graph $(G_2,\ee)$  on the vertex set $\Gamma$. Indeed, for any invariant function $f$ and $g \in \Gamma$, we have $\dE f (G_2,x,y) = \dE f (G_2,xg,yg) $. From Fubini's Theorem, we get as above
$$
\dE \sum_{x \in \Gamma} f (G_2,\ee,x) = \sum_{x \in \Gamma} \dE f (G_2,x^{-1},\ee) = \dE \sum_{x \in \Gamma} f (G_2,x,\ee).
$$ 
\end{example}

\begin{example}[Continued] In the setting of Theorem \ref{th:mainGQI}, let $G_0$ be the tensor product of the complete graph on $V$ (loops included) and $\CAY(\Gamma,S)$ of Example \ref{ex:1}.  Let $\AUT_0$ be the right action of $\Gamma $ on $\Gamma\times V$ defined for $g \in \Gamma$ by $(x,v) \mapsto (xg,v)$. The right-invariant symmetric array $P = (p_g(x))_{g \in S,x \in \Gamma} \in M_V(\dC)^{S \times \Gamma}$ defines a unimodular rooted marked graph $(G_3 , (\ee,o) )$ on the vertex set $\Gamma \times V$ where $o \in V$ is uniformly distributed on $V$. Indeed, by right-invariance of $P$, for any invariant function $f$ and $g \in \Gamma$, we have $\dE f (G_3,(x,v),(y,v)) = \dE f (G_3,(xg,v),(yg,v)) $.  We get
 \begin{eqnarray*}
|V| \cdot  \dE \hspace{-3pt}\sum_{(x,v) \in \Gamma \times V } f (G_3,(\ee,o),(x,v)) & = & \dE  \hspace{-3pt} \sum_{(x,u,v) \in \Gamma \times V^2 } f (G_3,(\ee,u),(x,v)) \\
& = &  \sum_{(x,u,v) \in \Gamma \times V^2 }  \dE f (G_3,(x^{-1},u),(\ee,v)) \\
& = & |V| \cdot  \dE \hspace{-3pt}\sum_{(x,u) \in \Gamma \times V } f (G_3,((x,u),(\ee,o)) .
 \end{eqnarray*}
\end{example}

\paragraph{Algebra of local operators. }
We finally define a unital $*$-algebra of operators associated to (random) marked graphs. Let $G = (V,E,\xi) \in \cG$ and let  $ p : \cGe \to \dC$ be a measurable function.  We  define an operator on compactly supported functions $f \in \ell^2 (V)$ by, for all  $u \in V$,
\begin{equation}\label{eq:defpG}
p_G f (u) = \sum_{v \in V} p(G,u,v) f(v).
\end{equation}
For a weighted graph, the adjacency operator $A_G$ in \eqref{eq:defA} is an example of such operator with $p(G,u,v) = \xi(u,v) \IND( u \sim v)$.
Let $\cA_G$ be the subset of invariant functions $ p : \cGe \to \dC$ such that  for some $K \geq 1$, for all $u,v \in V$, $ | p(G,u,v)| \leq K $ and $p(G,u,v) = 0$ if $d(u,v) \geq K$ (the graph distance in $G_0$ between $u$ and $v$). 
Setting $p^* (G,u,v) = \overline{p(G,v,u)}$, the set $\cA_G$ defines a unital $*$-algebra of bounded operators on $\ell^2(V)$ (since $G_0$ has bounded degree).

\setcounter{example}{0}

\begin{example}[Continued]  We continue with the notation of Example \ref{ex:1}. If $G_0 = \CAY(\Gamma,S)$ with $S = S^{-1}$ finite generating $\Gamma$, the $*$-algebra $\cA_{G_0}$ reduces to the group algebra $\dC [\Gamma]$.  
\end{example}

As explained in  \cite{MR2354165,zbMATH05700352,zbMATH06902684}, we can extend this to random operators.  Let $\rho \in \cP(\cGr)$ be a probability on random rooted marked graphs $(G,o)$. Let $\cA_\rho$ be the subset of invariant functions $ p : \cGe \to \dC$ such that $\rho$-a.s.\  for some $K \geq 1$, for all $u,v \in V$, $ | p(G,u,v)| \leq K $ and $p(G,u,v) = 0$ if $d(u,v) \geq K$ (functions are defined up to $\rho$-null sets). Then $\cA_\rho$ defines a unital $*$-algebra of bounded operators on the Hilbert space $\cH_\rho = \int^{\oplus} \ell^2(V) d \rho(G,o)$ (a direct integral). If $\rho$ is unimodular, then the algebra $\cA_\rho$ has a state: 
\begin{equation}
\tau (p) = \dE_{\rho} [ \langle \delta_o, p_G \delta_o \rangle ] = \dE_{\rho} [ p(G,o,o) ].
\end{equation}
This state is positive and faithful. Indeed, we have
$$
\tau ( p p ^*) = \dE_{\rho} \sum_{v \in V} |p(G,o,v)|^2 \geq 0 
$$
with equality if and only if $\rho$-a.s.\ the event $E_o$ occurs, where for $u \in V$, $E_u = \{\hbox{$\forall v \in V$ : } p(G,u,v) = 0 \}$.  However we apply \eqref{eq:defunimod}  to $f(G,u,v) = \IND ( E_u ) \IND ( d(u,v) \leq k) $ which is invariant if $p$ is invariant. We get $ \dE \sum_{v \in B(o,k)} \IND ( E_v) = 0$ where $B(o,k)$ is the ball of radius $k$. Since $k$ is arbitrary and $G_0$ is connected, we get $\rho$-a.s.\ $p(G,u,v) = 0$ for all $u,v \in V$ as requested.

This state is also tracial: for $p,q \in \cGe$,
$$
\tau( p q) = \dE_{\rho}  \sum_{v \in V} p (G,o,v) q (G,v,o) =   \dE_{\rho}  \sum_{v \in V} p (G,v,o) q (G,o,v) = \tau (qp),
$$
where we have applied \eqref{eq:defunimod} to $f(G,u,v) = p (G,u,v) q (G,v,u) $.

We denote by $\cL_\rho$ the von Neumann algebra associated to $\cA_\rho$ (its weak-$*$ closure). This von Neumann algebra $\cL_\rho$ equipped with its faithful, normal tracial state $\tau$ will play an important role in  the sequel.

We shall notably use its {\em von Neumann dimension}. We say that a closed vector subspace $H$ of $\cH_\rho$ is {\em invariant} if, $P_H$, the orthogonal projection onto $H$, is in $\cL_\rho$. Then, the dimension of $H$ is 
\begin{equation}\label{eq:defdim}
\dim ( H ) = \tau (P_H) = \dE_{\rho} [ \| P_H \delta_o \|^2] \in [0,1],
\end{equation}
where $\| \cdot \|$ is the Euclidean norm in $\ell^2(V)$. If $H_1, H_2$ are closed invariant subspaces then $H_1 \cap H_2$ and $H_1 + H_2$ are also closed and invariant and the dimension formula holds:
\begin{equation}\label{eq:dimfor}
\dim ( H_1 + H_2 ) = \dim (H_1) + \dim (H_2) - \dim (H_1 \cap H_2),
\end{equation}
see \cite[exercice 8.7.31]{zbMATH00913566} and \cite[Theorem 1.12(2)]{zbMATH01805723}.

\section{Monotone labelling revisited}\label{sec:mono}

In this section, we give with Theorem \ref{th:mono} a quantitative improvement over the monotone labelling technique introduced in \cite{zbMATH06821385}. We also prove an extension to partition labelling,  Theorem \ref{th:monoP}, which will be used in the proof of Theorem \ref{th:mainGQI}.

\paragraph{Monotone labelling.}

We borrow from \cite{zbMATH06821385} the following central definition. 

\begin{definition}\label{def:label}
Let $G = (V,E)$ be a graph. For a given map $\eta : V \to \dZ$, we partition the vertices $V$ into three disjoint subsets: a vertex $x \in V$ is 
\begin{enumerate}[(i)]
\item  {\em Prodigy}, if there exists $\hat x \sim x $ such that $\eta( \hat x) < \eta(x)$ and for all $y \sim \hat x$, $y \ne x$, $\eta(y ) < \eta(x)$; 
\item {\em Level}, if $x$ is not a prodigy and for all $y \sim x$, $\eta(y) \leq \eta(x)$; 
\item {\em Bad} otherwise.
\end{enumerate}
We say that it is {\em l-bad} if it is level or bad. 
\end{definition}

\paragraph{Finite graphs.}
As a warmup, we start with a finite weighted graph $G = (V,E,p)$.  We denote by $(\lambda_i)_{1 \leq i \leq |V|}$ the eigenvalues counting multiplicities of $A_G$ defined in  \eqref{eq:defA}. The spectral counting measure of $A_G$ is defined as 
\begin{equation}\label{eq:defLG}
L_{G} =  \sum_{k=1}^{|V|} \delta_{\lambda_k}.
\end{equation}
This is a measure on $\dR$ with total mass $|V|$ and whose mass on an interval $I$ is the number of eigenvalues in $I$. We say that two intervals $I,J$ are equivalent, denoted by $I \sim J$, if $I$ and $J$ have the same center. We have the following extension of \cite[Theorem 2.2]{zbMATH06821385}.

\begin{lemma}\label{th:monof}
Let $G = (V,E,p)$ be a finite weighted graph and $\eta$ a labelling with $n$ distinct values. Assume $p_{*} = \inf_{x} |p(x,\hat x)| >0$ where the infimum is over all $x$ prodigy and $\hat x$ is  as  in Definition \ref{def:label}(i).  Let $\alpha =   3 \| A_G\|_{\rm op} / p_*$ and $\kappa = 3  \|A_{G}\|_{\rm op} \sqrt{ \alpha^2 -1} \geq 6 \sqrt{2} \|A_{G}\|_{\rm op}$. If $LB$ is the set of l-bad vertices, for any interval $I$ of length at most $ \kappa \alpha^{-n}$, we have  
$$
L_{G} ( I ) \leq  |LB|.
$$

Moreover, there exists $\beta = \beta ( \|A_G\|_{\rm op}, p_*) \geq 1$ such that for any $\theta >1$ and any interval $I$ of length at most $\beta^{-n}$, we have  
$$
L_{G} ( I ) \leq  |B|  + \sum_j L_{G_j} ( I_n ),
$$
where $B$ is the set of bad vertices, $G_j$ is the restriction of $G$ to level vertices with label $j$ and $I_n \sim I$ is the closed interval of length $|I|^{1/ (6 n  \ln n)}$. 
\end{lemma}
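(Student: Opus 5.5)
We approach the first claim by induction over the labels in increasing order. Write the distinct labels as $t_1<\dots<t_n$ and let $V^{\le j}$ be the vertices with label at most $t_j$. Let $H_I$ be the spectral subspace $\mathbf{1}_I(A_G)\dC^V$, so that $\dim H_I=L_G(I)$. Set $\lambda$ to be the center of $I$ and $\varepsilon=|I|/2$. Then every unit $f\in H_I$ satisfies $\|(A_G-\lambda)f\|\le\varepsilon$.

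The key mechanism is an approximate version of the exact eigenvalue argument of \cite{zbMATH06821385}. There, a prodigy $x$ with its neighbour $\hat x$ satisfies
$$p(\hat x,x)f(x)=(\lambda-p(\hat x,\hat x))f(\hat x)-\sum_{y\sim \hat x,y\ne x}p(\hat x,y)f(y),$$
and all $y$ on the right have label $<\eta(x)$. Hence $f(x)$ is determined by values at strictly lower labels. In the approximate setting an extra error term $((A_G-\lambda)f)(\hat x)$ appears. Dividing by $|p(\hat x,x)|\ge p_*$ gives
$$|f(x)|\le p_*^{-1}\bigl(\|A_G\|_{\rm op}+|\lambda|\bigr)\cdot(\text{norm of }f\text{ on lower labels})+p_*^{-1}\varepsilon.$$
Since $|\lambda|\le \|A_G\|_{\rm op}+\varepsilon$, this is bounded by a factor of roughly $\alpha$ times the lower-level norm, plus $\varepsilon/p_*$.

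The linear-algebra step is the following. Consider the restriction map $R:f\mapsto f|_{LB}$ from $H_I$ to $\dC^{LB}$. It suffices to show $R$ is injective, which gives $L_G(I)\le |LB|$. Suppose $f\in H_I$ has $\|f\|=1$ and vanishes on $LB$. We propagate level by level. At label $t_1$, every vertex is l-bad or a prodigy, and a prodigy at the minimal label is impossible, since it would need a neighbour with smaller label. So $f=0$ on $V^{\le1}$. Inductively, if $\|f|_{V^{\le j-1}}\|\le e_{j-1}$, then on label $t_j$ the prodigies obey the estimate above, and the l-bad vertices contribute zero. Summing over prodigies at level $j$ requires the map $x\mapsto\hat x$ to be injective on prodigies of a given label. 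This holds because if $\hat x=\hat x'$ with $x\ne x'$, then $\eta(x')<\eta(x)$, and symmetrically $\eta(x)<\eta(x')$, a contradiction. We obtain a recursion of the form $e_j\le \alpha' e_{j-1}+c\,\varepsilon/p_*$. This yields $1=\|f\|\le C\varepsilon\alpha^{n}/\|A_G\|$. The constants have to be arranged (this is where $\sqrt{\alpha^2-1}$ and the factor $3$ enter, via summing the geometric series in squared norms) so that $|I|\le\kappa\alpha^{-n}$ contradicts this. So $R$ is injective.

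For the second claim we refine the same scheme. Level vertices are not controlled by lower labels. However, the vertices of a level class $j$ interact only with labels $\le t_j$, and within the same label only with other level vertices of that label. Consider a unit $f\in H_I$ vanishing on $B$. At each label we split $f$ on level vertices into its component in the spectral subspace of $A_{G_j}$ for the enlarged window $I_n$, and its orthogonal complement. On the complement, $A_{G_j}-\lambda$ is invertible with inverse norm at most $2/|I_n|$. Write $(A_G-\lambda)f$ restricted to the level vertices of label $j$ as $(A_{G_j}-\lambda)f|+(\text{lower-label terms})$, since higher-label neighbours are impossible by the level condition. Then the complement component is bounded by $|I_n|^{-1}$ times (lower-label norm $+\,\varepsilon$). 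Thus the injectivity argument applies to the map $f\mapsto\bigl(f|_B,\ (\text{projections onto } \mathbf 1_{I_n}(A_{G_j}))_j\bigr)$, whose target has dimension $|B|+\sum_j L_{G_j}(I_n)$.

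The recursion now has per-step growth factor $\max(\alpha, |I_n|^{-1})$. With $|I_n|=|I|^{1/(6n\ln n)}$ the product over $n$ steps is about $|I|^{-1/(6\ln n)}\alpha^n$. This is much smaller than $|I|^{-1}$ provided $|I|\le\beta^{-n}$ for suitable $\beta$, giving the contradiction. The main obstacle is this bookkeeping: choosing the window $I_n$ so that the accumulated amplification stays below $\varepsilon^{-1}$, while correctly handling intra-level coupling and the mixing between prodigy and level vertices at the same label. I would first try a crude bound, taking the per-level factor $4\|A_G\|/|I_n|+\alpha$, and check that the stated exponent $1/(6n\ln n)$ suffices.
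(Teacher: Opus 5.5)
Your proposal is correct and is essentially the paper's proof. The first part is the argument that $\cF\cap\cL\cB^\perp=0$, using the injectivity of $x\mapsto\hat x$ and a geometric recursion over labels. The second part is the argument that $\cF\cap\cB^\perp\cap\bigcap_j\cF_j^\perp=0$, differing only in that you use the single window $I_n$ at every label, where the paper uses graded windows of length $|I|^{s_j}$; yours works since $(C/|I_n|)^n|I|=C^n|I|^{1-1/(6\ln n)}<1$ for $n\ge 2$ once $|I|\le\beta^{-n}$ with $\beta$ large.

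One correction: contrary to your claim that level vertices of label $j$ interact within that label only with other level vertices, they can be adjacent to prodigies of the same label $j$. So you must first bound the prodigy part at label $j$, which depends only on lower labels, and then feed it into the level equation, as the paper does.
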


For adjacency operators, Lemma \ref{th:monof} applied to $I = \{ \lambda \}$ is exactly \cite[Theorem 2.2]{zbMATH06821385}. The constant $3$ in the definition of $\alpha$ and $\kappa$ is somewhat arbitrary, the proof gives a constant $2.09 \pm 0.01$. An explicit value of $\beta$ could be extracted from the proof. 

\begin{proof}[Proof of Lemma \ref{th:monof}]
Since the spectrum is contained in the closed interval $I_0 = [-\|A_G\|_{\rm op} ,\|A_G\|_{\rm op} ]$, without loss of generality, we can assume that $I \subset I_0$.  In the sequel, $\lambda \in I$ is the center of the interval $I$.  Without loss of generality, we may assume that the range of $\eta $ is $\{0,1, \ldots, n-1\}$. 

We start with the proof of the first statement which is simpler but conveys the main idea. 
Let $LB$ be the set of l-bad vertices and $P_j$ be the sets of  prodigy vertices of label $j$. We denote by $\cL\cB$ and $\cP_j$ the vector subspaces of $\dC^V$ spanned by vectors supported on $LB$  and $P_j$. Let $\cF$ be the vector space spanned by eigenvectors of $A_G$ with eigenvalues in $I$. We have $\dim(\cF) = L_G(I)$ and $\dim (\cL\cB) = |LB|$.

Let $\cF' = \cF \cap \cL\cB^\perp$ where $\perp$ denotes the orthocomplement. From Since $\dim ( \cF + \cL\cB^\perp) = \dim ( \cF  ) + \dim(\cL\cB^\perp) - \dim (\cF')$, we deduce from the codimension inequality 
$$
\dim(\cF) \leq \dim( \cF') + \dim(\cL\cB),
$$  

Therefore, the proof of the lemma will be complete if we prove that $\cF'$ is trivial.  To this end, since any vector of $\cF'$ is $0$ on l-bad vertices, we may write $\cF' = \bigoplus_j \cF'_j $ with $\cF'_j = \cF' \cap \cP_j$. We take a vector $f \in \cF'$ and set $f_j \in \cF'_j$ to be its projection onto $\cP_j$.

If $j=0$ then $\cP_0$ is trivial (no prodigy can have minimal label $0$). In particular $f_0 = 0$. We prove by induction on $j = 1, \ldots,n-1$ that 
\begin{equation}\label{eq:HR}
\| f_j \| \leq  \frac{|I|}{p_*}  \| f \|  \alpha^{j-1} .   
\end{equation}

We denote by $T_j$ the operator from $\cP_j$ to $\dC^V$ defined for $x \in P_j$ by $T_j \delta_x = p(\hat x, x) \delta_{\hat x}$ where $\hat x \sim x$ as in the definition of prodigy. The definition implies that for any $x \ne y \in P_j$, $\hat x \ne \hat y$, hence $T_j$ restricted to its image, say $\hat \cP_j$, is invertible and 
$$
\| T_j ^{-1} \|_{\rm op}  = \frac{1}{\min_{x \in P_j} |p(x,\hat x) |}\geq \frac{1}{p_*},
$$
If $f_{\leq j} = f_1 + \ldots + f_j$, we get for any $x \in P_{j+1}$,  
$$
(A_G-\lambda) f (\hat x) =  p(\hat x, x) f(x)  + (A_G-\lambda) f_{\leq j} (\hat x)
$$
or, equivalently if $\Pi_{j+1}$ is the projection onto the image of $T_{j+1}$,  
\begin{equation}\label{eq:Tj}
\Pi_{j+1} (A_G-\lambda) f  =  T_{j+1} f_{j+1}  + \Pi_{j+1} (A_G-\lambda) f_{\leq j}.
\end{equation}

Since $f \in \cF$ and $\lambda$ is the center of $I$, we have $\| (A_G-\lambda) f \|\leq (|I|/2) \| f \|$. Also, $\| A_G-\lambda \|_{\rm op} \leq 2 \| A_G\|_{\rm op}$, where we have used that $|\lambda| \leq \| A_G\|_{\rm op} $. We get that 
\begin{eqnarray*}
\| f_{j+1} \| & \leq & \| T^{-1}_{j+1} \|_{\rm op} \left (\| (A_G-\lambda) f \| + \| (A_G-\lambda) f_{\leq j} \| \right)  \\
& \leq & \frac{|I|  \| f \|}{p_*}  \left( \frac 1 2   + \IND_{j \geq 1} \frac{2 \| A_G \|_{\rm op} }{p_*} \frac{ \alpha^{j}}{ \sqrt{\alpha^2 -1}}\right),
\end{eqnarray*} 
where we have used Pythagoras Theorem and the induction hypothesis \eqref{eq:HR} to get
\begin{equation}\label{eq:pyth}
\| f_{\leq j} \|^2 \leq \sum_{k=1}^{j} \| f_k \|^2 \leq \left( \frac{\| f \| |I|}{ p_* } \right)^2 \sum_{k=1}^j \alpha^{2(k-1)} \leq  \left( \frac{\| f \| |I|}{ p_* } \right)^2 \frac{\alpha^{2j} }{\alpha^2 - 1}.
\end{equation}
For $j = 0$, we obtain the claimed estimate \eqref{eq:HR}. For $j \geq 1$, since $\| A_G \|_{\rm op} /p_* = \alpha /3$ and $\alpha \geq 3$, we arrive at for $j \geq 1$,
$$
\| f_{j+1} \| \leq \frac{|I|  \| f \|}{p_*} \alpha^{j} \left( \frac{1}{2\alpha^j} + \frac{2 \alpha}{  3 \sqrt{\alpha^2 -1}}  \right) \leq  \frac{|I|  \| f \|}{p_*} \alpha^{j} \left( \frac{1}{6} + \frac{2}{ \sqrt{3^2 -1}}  \right) \leq \frac{|I|  \| f \|}{p_*} \alpha^{j}. 
$$
This proves \eqref{eq:HR} for all $j=1,\ldots,n-1$. In particular, from \eqref{eq:pyth} applied to  $j = n-1$ and $f_{\leq n-1} = f$, we find
$$
\| f \|  \leq    \frac{\| f \| |I|}{ p_* }  \frac{\alpha^{n-1} }{\sqrt{\alpha^2 - 1}} =   \| f \|  \frac{ |I| \alpha^{n} }{\kappa}.
$$
Hence, if $|I| < \kappa \alpha^{-n}$ then $f = 0$. This proves that $\cF'$ is empty and concludes the proof of the first statement.

The second statement is a refinement which takes into account level vertices. Up to increasing $\beta$, we may assume $|I| \leq 1$ and $n \geq n_0$. Let $\theta > 1$, $s_j = c_\theta /(j+1)^\theta$, $c_\theta= (\theta -1)/(2\theta)$, and set $S_j = \sum_{i \leq j } s_i$.  We have $S_j \leq 1/2$. For some $C_0 \geq 2$ to be defined later, let $\alpha_j = C_0^j |I|^{1-S_j} $ and $I_j \sim I$ be the closed interval of length $|I|^{s_j}$.

Let $B$, $P_j$ and $L_j$ be the set of bad, prodigy and level vertices of label $j$. Let $\cF$ be as above and $\cF_j$ be the vector space of eigenvectors of $G_j$ in $I_j$. Extending a vector in $\cF_j$ by zeros, $\cF_j$ is a closed invariant subspace of $\dC^V$ supported on $L_j$. As above, let $\cB$ and $\cP_j$ be the vector space of vectors supported on $B$ and $P_j$ respectively.  We set $\cF' = \cF \cap \cB^\perp \cap \cF_j^\perp$, from \eqref{eq:dimfor}, we find
 as above that $$
\dim(\cF) \leq \dim(\cF') + \dim (\cB) + \sum_j \dim (\cF_j).
$$
By construction, we have $\dim(\cF)  = L_G(I) $,  $\dim (\cB) = |B|$ and $ \dim (\cF_j) = L_{G_j} (I_j) \leq L_{G_j} (I_n)$. It is thus sufficient to check that $\cF' = 0$ when  $2 \alpha_{n-1} \leq 1$ (we may then choose any $\beta \geq C_0^{2}$ and $\theta = 1 + 1/\ln n$ so that $s_{n-1} \geq 1/ (6 n \ln n)$ for $n$ large enough).

To this end, let $f \in \cF'$,   $f_j$, $f_{\leq j}$ be its restriction to vertices of label $j$, up to $j$, $f^+_j$ its restriction to $P_j$ and $f^o_j$ its restriction to $L_j$.   The outer boundary of $L_j$, $\partial L_j$, is the set of $x \in V$ such that $x \notin L_j$ but $x \sim y$ for some $y \in L_j$. Observe that $\partial L_j$  is included in the union of $P_j$ and vertices with label less than $j$. Hence 
\begin{equation}\label{eq:Lj}
\partial L_j \subset B \cup P_j \cup_{i < j} (L_i \cup P_i).
\end{equation}

We prove by induction on $j$, 
\begin{equation}\label{eq:HR}
\|f_j \| \leq \alpha_j \| f\|.
\end{equation} 

Note that $f^+_0 = 0$ since no vertex of label $0$ is a prodigy. Also since $f = 0$ on $B$, we deduce from \eqref{eq:Lj} that $f$ is $0$ on $\partial L_0$. In particular, if $\Pi_j^o$ is the orthogonal projection on $L_j$, we find
$$
\Pi^o_0 (A_G - \lambda) f = \Pi^o_0 (A_G - \lambda) f^o_0 = (A_{G_0} - \lambda) f^o_0.
$$ 
We have $\| \Pi^o_0 (A_G - \lambda) f \| \leq (|I|/2) \| f \|$ and $\| (A_{G_0} - \lambda) f^o_0 \| \geq (|I|^{S_0}/2) \| f^o_0\|$ since $f \in \cF$ and $f^o_0 \in \cF_0$ respectively.  We get $ \| f^o_0\| = \| f_0\| \leq |I|^{1-s_0} \| f \| = \alpha_0 \| f \|$. It proves  \eqref{eq:HR} for $j=0$.

We are ready for the induction. Assume that \eqref{eq:HR} for all $i \leq j$. In particular, if $C_0 \geq \sqrt 2$,  we have 
\begin{equation}\label{eq:fleqj}
\| f_{\leq j} \|  = \sqrt{ \sum_{i=0}^j \| f_i\|^2 } \leq \sqrt 2 |I|^{1-S_j} C_0^j \| f \| \leq \sqrt 2 \alpha_j \| f \|,
\end{equation}
 (recall that $\sum_{0 \leq i \leq j} x^i \leq 2 x^j$ if $x \geq 2$).
 Let $T_j$ be the operator from $\cP_j$ to $\hat \cP_j$ defined above. Equation \eqref{eq:Tj} gives
$$
\Pi_{j+1} (A_G-\lambda) f  =  T_{j+1} f^+_{j+1}  + \Pi_{j+1} (A_G-\lambda) f_{\leq j},
$$
where $\Pi_{j+1}$ is the orthogonal projection onto $\hat \cP_{j+1}$. Using again that $\| T_{j+1}^{-1} \|_{\rm op} \leq p_*^ {-1}$ and $\| A_G-\lambda\|_{\rm op} \leq 2 \| A_G \|_{\rm op}$, we get, since $S_{j} \geq 1/2$, for some $C > 2$,
$$
\| f^+_{j+1} \| \leq  \frac{1}{p_*} \left( |I| +  2 \sqrt 2 \| A_G \|_{\rm op}   \alpha_j  \right) \| f \| \leq C \alpha_j \| f \|_{\rm op}.
$$
Similarly, from \eqref{eq:Lj},
$$
\Pi^o_{j+1} ( A_G - \lambda ) f  = \Pi^o_{j+1} ( A_G - \lambda ) f^+_{j+1} +  \Pi^o_{j+1}  ( A_G - \lambda ) f_{\leq j } + \Pi^o_{j+1}  ( A_{G} - \lambda ) f^o_{j+1} .
$$
By construction, $ \Pi^o_{j+1}  ( A_{G} - \lambda ) f^o_{j+1}  =  ( A_{G_{j+1}} - \lambda ) f^o_{j+1} $. Hence, since $f^o_{j+1} \in \cF_{j+1}$, we find, for some new $C > 1$, 
$$
|I|^{s_{j+1}} \| f^o_{j+1} \| \leq |I| \| f \| + 2 \|A_G\|_{\rm op} \| f^+_{j+1}\| + 2 \|A_G\|_{\rm op}  \| f_{\leq j }\| \leq C \alpha_j \| f \|.
$$
Therefore, 
$$
\| f^o_{j+1} \| \leq C \alpha_j |I|^{-s_{j+1}} \| f \|.  
$$
So finally, for some new $C  > 2$, 
$$\| f_{j+1} \| \leq \sqrt{\| f^o_{j+1} \|^2 + \| f^+_{j+1} \|^2 }  \leq   C \alpha_j |I|^{-s_{j+1}}  \| f \|. $$
We fix $C_0$ to be equal to this $C$ and this proves the induction hypothesis. For $j = n-1$, we find from \eqref{eq:fleqj},
$
\| f \| =  \| f_{\leq n-1} \| \leq \sqrt 2 \alpha_{n-1} \| f \|.
$
The conclusion follows.
\end{proof}

\paragraph*{Unimodular graphs.}

We extend Theorem \ref{th:monof} to infinite unimodular graphs. Let $(G,o)$ with $G = (V,E,p)$ be a unimodular weighted graph as defined in Section \ref{sec:unimod}. We assume without loss of generality that the skeleton graph $G_0$ has loops at all vertices.  Let $\eta$ be a measurable labelling of the vertices of $G$ (possibly on an enlarged probability space). We enlarge the mark space $\cZ = \dC \times \dZ$ and form a $\cZ$-marked graph $(G,\eta)$ by adding the labels of all vertices. We then say that $\eta$ is an {\em invariant labelling}, if the rooted marked graph $(G,\eta,o)$ is unimodular.

Let $A_G$ be the weighted adjacency operator of $G$ defined in \eqref{eq:defA}. Recall the definition for $x \in V$ of $\mu_G^{\delta_x}$ in \eqref{eq:defmuG}.

\begin{theorem}\label{th:mono}
Let $(G,o)$ be a unimodular weighted  rooted  graph and $\eta$ an invariant labelling with $n$ distinct values. Assume that there exist deterministic $K , p_*  >0$ such that a.s.\ $\| A_G\|_{\rm op} \leq K$ and $\inf_{x} |p(x,\hat x)| \geq p_* $ where the infimum is over all $x$ prodigy and $\hat x$ is  as  in Definition \ref{def:label}(i). Let $\alpha =   3 K  / p_*$ and $\kappa = 3  K \sqrt{ \alpha^2 -1} \geq 6 \sqrt{2} K$. If LB is set of l-bad vertices, for any closed interval $I$ of length at most $ \kappa \alpha^{-n}$, we have  
$$
\dE [ \mu^{\delta_o}_{G} ( I ) ] \leq \dP ( o \in LB) .
$$
Moreover, there exists $\beta = \beta ( K, p_*) \geq 1$ such that for any  interval $I$ of length at most $\beta^{-n}$, we have  
$$
L_{G} ( I ) \leq  |B|  + \sum_j L_{G_j} ( I_n ),
$$
where $B$ is the set of bad vertices, $G_j$ is the restriction of $G$ to level vertices with label $j$ and $I_n \sim I$ is the closed interval of length  $|I|^{1/ (6 n  \ln n)}$. 
\end{theorem}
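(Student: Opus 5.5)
The plan is to transpose the proof of Lemma \ref{th:monof} into the von Neumann algebra $\cL_\rho$ of the unimodular graph $(G,\eta,o)$, replacing finite dimensions by the von Neumann dimension \eqref{eq:defdim} and using the dimension formula \eqref{eq:dimfor}. Let $E$ be the spectral resolution of $A_G$ on $\cH_\rho$. Since $A_G\in\cA_\rho$, the projection $E(I)$ lies in $\cL_\rho$. Its range $\cF$ is therefore invariant, with
\[
\dim(\cF)=\tau(E(I))=\dE[\langle\delta_o,E(I)\delta_o\rangle]=\dE[\mu^{\delta_o}_G(I)].
\]
Similarly, the diagonal projections onto the vectors supported on $LB$, on $P_j$ (prodigies of label $j$), on $B$, or on $L_j$ are multiplication operators by indicators that are invariant functions of $(G,\eta,u,u)$. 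Indeed, the type of a vertex depends only on labels in its $2$-neighbourhood. Hence these projections lie in $\cL_\rho$, and $\dim(\cL\cB)=\tau(\Pi_{LB})=\dP(o\in LB)$.

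Next, with $\cF'=\cF\cap\cL\cB^\perp$, I apply \eqref{eq:dimfor} to $\cF$ and $\cL\cB^\perp$. Since $\dim(\cF+\cL\cB^\perp)\le 1$ and $\dim(\cL\cB^\perp)=1-\dim(\cL\cB)$, this gives $\dim\cF\le\dim\cF'+\dP(o\in LB)$. It therefore suffices to show $\cF'=\{0\}$. That is a pointwise statement: for $\rho$-a.e. $(G,\eta)$, any $f\in\ell^2(V)$ in the range of $E_G(I)$ that vanishes on $LB$ is zero. The fibrewise argument is the one from the lemma. Decompose $f=\sum_j f_j$ with $f_j=f\mathbf 1_{P_j}$. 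Define $T_j\delta_x=p(\hat x,x)\delta_{\hat x}$ for $x\in P_j$. The map $x\mapsto\hat x$ is injective on $P_j$, because two prodigies $x\neq y$ of the same label cannot share $\hat x$. The condition $p(\hat x,x)$ bounded below by $p_*$ gives $\|T_j^{-1}\|\le p_*^{-1}$ on $\ell^2$. Equation \eqref{eq:Tj} and the induction \eqref{eq:HR} then go through verbatim in $\ell^2(V)$, since they only use operator norms, $\|(A_G-\lambda)f\|\le(|I|/2)\|f\|$, Pythagoras over finitely many labels, and $\|A_G\|_{\rm op}\le K$. They give $\|f\|\le\|f\||I|\alpha^n/\kappa$. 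For a closed $I$ of length strictly below the threshold this forces $f=0$. The equality case $|I|=\kappa\alpha^{-n}$ follows by a slight enlargement of the constants, or by continuity from above of $\mu(I)$ over shrinking closed intervals of length just above, after rechecking the slack in the constant $3$.

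For the second statement, the natural unimodular reading is
\[
\dE\mu^{\delta_o}_G(I)\le\dP(o\in B)+\sum_j\dE\bigl[\mathbf 1_{o\in L_j}\mu^{\delta_o}_{G_j}(I_n)\bigr],
\]
and I would prove it in this form. The projection onto the range $\cF_j$ of $\mathbf 1_{I_j}(A_{G_j})$, extended by zero outside $L_j$, is again in $\cL_\rho$, because $A_{G_j}=\Pi^o_jA_G\Pi^o_j$ is local and invariant. Its trace is the $j$-th term with $I_j\subset I_n$. The subspaces $\cB$ and $\cF_j$ are mutually orthogonal, having disjoint supports. Iterating \eqref{eq:dimfor} therefore gives $\dim\cF\le\dim(\cF\cap\cB^\perp\cap\bigcap_j\cF_j^\perp)+\dim\cB+\sum_j\dim\cF_j$. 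The vanishing of the intersection is again a fibrewise $\ell^2$ argument, identical to the second part of Lemma \ref{th:monof}. It uses the boundary inclusion \eqref{eq:Lj} and the spectral gap $\|(A_{G_j}-\lambda)g\|\ge(|I|^{s_j}/2)\|g\|$ for $g\perp\cF_j$ supported on $L_j$.

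I expect the main obstacle to be the measurability and affiliation bookkeeping. One must verify that each projection involved belongs to $\cL_\rho$, and not merely that it is decomposable: the local $*$-algebra must contain the indicator multiplications and be closed under the Borel functional calculus of $A_{G_j}$. One must also justify that an element of $\cF'$ in the direct integral vanishes because it vanishes fibrewise. I would handle the first point by noting that spectral projections are strong limits of polynomials in $A_G$ or $A_{G_j}$, which are elements of $\cA_\rho$. I would handle the second by disintegrating $\cF'$ as the range of a decomposable projection.
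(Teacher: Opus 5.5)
Your proposal is correct and follows the paper's proof. You work in $\cL_\rho$ and use the dimension formula to get $\dim\cF\le\dim(\cF\cap\cL\cB^\perp)+\dP(o\in LB)$, then show that intersection is trivial by rerunning the fibrewise induction of Lemma~\ref{th:monof}. The second statement is handled the same way, read sensibly in its unimodular form. Your additional bookkeeping only makes explicit details the paper leaves implicit: membership of the projections in $\cL_\rho$, fibrewise vanishing in the direct integral, and the boundary case $|I|=\kappa\alpha^{-n}$. That boundary case in fact needs no enlargement argument, since $\sum_{k=1}^{n-1}\alpha^{2(k-1)}<\alpha^{2(n-1)}/(\alpha^2-1)$ holds strictly.
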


If $|V| < \infty$ then Theorem \ref{th:mono} reduces to Lemma \ref{th:monof}. Indeed, if $o$ is uniformly distributed on $V$, the spectral theorem implies that $|V |\cdot \dE [ \mu_{G}^{\delta_o} ] = \sum_{v \in V} \mu_{G}^{\delta_v}= L_G$ with $L_G$ as in \eqref{eq:defLG}. We note also that for adjacency operators, if $I = \{ \lambda \}$ then Theorem \ref{th:mono} is \cite[Theorem 2.3]{zbMATH06821385}. 
With the improvement of Lemma \ref{th:monof}, the proof is an easy extension of the proof of \cite[Theorem 2.3]{zbMATH06821385}.

\begin{proof}[Proof of Theorem \ref{th:mono}] Let $\rho$ be the law of $(G,\eta,o)$, let $\cA_{\rho}$ be the unital $*$-algebra of local operators and let $\cL_\rho$ be its associated von Neumann algebra of bounded operators on $\cH_\rho = \int^{\oplus} \ell^2(V) d\rho(G,o)$ defined in Section \ref{sec:unimod}. Recall that a vector  subspace $H$ of $\cH_\rho$ is invariant if the orthogonal projection onto $H$ is in $\cL_\rho$.

We essentially repeat the proof of Lemma \ref{th:monof}. Without loss of generality, we assume that the range of $\eta $ is $\{0,1, \ldots, n-1\}$, that $I \subset I_0 = [-K ,K]$ and let $\lambda \in I$ be the center of the interval $I$. 

For the first statement, let $LB$ be the set of bad vertices and $P_j$ be the sets of  prodigy vertices of label $j$. We denote by $\cL\cB$  and $\cP_j$ the vector subspaces spanned by vectors supported on $LB$ and $P_j$. All these vectors spaces are closed and invariant since $\eta$ is invariant. Let $\cF $ be the vector space spanned by the image of $\IND_I (A_\rho)$. Since $I$ is closed and $A_G$ is an element of $\cA_\rho$, $\cF$ is closed and invariant.

We denote by $\dim$ the von Neumann dimension defined in \eqref{eq:defdim}. By definition, we have $\dim(\cF) = \dE \mu^{\delta_o}_G(I)$ and $\dim (\cL\cB) = \dP ( o \in LB)$. Let $\cF' = \cF \cap \cL\cB^\perp $ where $\perp$ denotes the orthocomplement. Since $\cF$ is contained in the sum of $\cF'$, $\cL\cB$ and the $\cL_j$'s, we deduce from \eqref{eq:dimfor} that 
$$
\dim(\cF) \leq \dim( \cF') + \dim(\cL\cB).
$$ 
At this stage, we can reproduce the proof of Lemma \ref{th:monof}. The same minor modification applies to the proof of the second statement. \end{proof}

\begin{remark}
As explained in the preamble of Section \ref{sec:mono}, we have chosen to define unimodular random marked graphs in the restricted setting of a fixed skeleton graph for simplicity. It should however be noted that the proof of Theorem \ref{th:mono} holds in full generality (as it was done in \cite[Theorem 2.3]{zbMATH06821385}).
\end{remark}

\paragraph*{Monotone block labelling.}

 We now present an extension of Theorem \ref{th:mono} for labelling of blocks of vertices instead of individual vertices. This extension will be used in the proof of Theorem \ref{th:mainGQI} and  in the forthcoming Subsection \ref{subsec:bonus}. We first extend Definition \ref{def:label} to vertex partitions:

\begin{definition}\label{def:labelP}
Let $\Pi$ be a countable set, $G = (V,E)$ be a graph and $ \{ V_b \}_{b \in \Pi}$ a partition of $V$ into blocks indexed by $\Pi$. For $b,c \in \Pi$, we write $b \sim c$ if there exist $x \in V_b$ and $y \in V_c$ such that $x \sim y$. For a given map $\eta : \Pi \to \dZ$, we partition the blocks into three disjoint subsets: a block $b \in \Pi$ is 
\begin{enumerate}[(i)]
\item  {\em Prodigy}, if there exists $\hat b \sim b $ such that $\eta( \hat b) < \eta(b)$ and for all $c \sim \hat b$, $c \ne b$, $\eta(c ) < \eta(b)$; 
\item {\em Level}, if $b$ is not a prodigy and for all $c \sim b$, $\eta(c) \leq \eta(b)$; 
\item {\em Bad} otherwise.
\end{enumerate}
We say that it is l-bad if it is level or bad.
We say that a vertex $x \in V$ is prodigy/level/bad/l-bad if it belongs to a block with this property. For $x \in V_b$, let $\hat \eta (x) = (b,\eta(b)) \in \Pi \times \dZ$. If $(G,o)$ is a unimodular marked graph, we say that $\hat \eta$ is an invariant block labelling, if the marked graph $(G,\hat \eta,o)$ is unimodular. 
\end{definition}

Note that the blocks of the partition are not necessarily finite. As above, let $(G,o)$ with $G = (V,E,p)$ be a unimodular weighted  rooted  graph as defined in Section \ref{sec:unimod}. 
We have the following extension of Theorem \ref{th:mono} to invariant block labelling.

\begin{theorem}\label{th:monoP}
Let $(G,o)$ a unimodular weighted  rooted  graph and $\hat \eta$ an invariant block labelling with $n$ distinct values and with partition $\{V_b\}_{b \in \Pi}$. Let $K,p_*$ be positive real numbers. Assume that  a.s.\ $\| A_G\|_{\rm op} \leq K$. For all prodigy $b \in \Pi$, let $p_b = (p(G,x,y))_{x \in V_{\hat b} , y \in V_b}$  with $\hat b$ as  in Definition \ref{def:labelP}(i) and assume that it defines a.s.\ an invertible map $\ell^2(V_b) \to \ell^2(V_{\hat b})$ whose inverse has operator norm at most $ p_*^{-1}$. Let $\alpha =   3 K  / p_*$ and $\kappa = 3  K \sqrt{ \alpha^2 -1} \geq 6 \sqrt{2} K$. If LB is set of l-bad vertices, for any closed interval $I$ of length at most $ \kappa \alpha^{-n}$, we have  
$$
\dE [ \mu^{\delta_o}_{G} ( I ) ] \leq \dP ( o \in LB) .
$$
Moreover, there exists $\beta = \beta ( K, p_*) \geq 1$ such that for any interval $I$ of length at most $\beta^{-n}$, we have  
$$
L_{G} ( I ) \leq  |B|  + \sum_j L_{G_j} ( I_n ),
$$
where $B$ is the set of bad vertices, $G_j$ is the restriction of $G$ to level vertices with label $j$ and $I_n \sim I$ is the closed interval of length $|I|^{1/ (6 n  \ln n)}$. 
 \end{theorem}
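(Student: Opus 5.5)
The plan is to repeat the proof of Theorem \ref{th:mono}, which itself follows Lemma \ref{th:monof}, replacing single vertices by blocks and scalar weights by the block operators $p_b$.

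\emph{Setting.} Let $\rho$ be the law of $(G,\hat\eta,o)$. Because $\hat\eta$ is an invariant block labelling, the following closed subspaces of $\cH_\rho$ are invariant, i.e.\ their projections lie in $\cL_\rho$:
\begin{itemize}
\item $\cL\cB$, the vectors supported on l-bad vertices;
\item $\cP_j$, the vectors supported on prodigy vertices of label $j$;
\item $\cF$, the range of $\IND_I(A_G)$.
\end{itemize}
Each of these projections is multiplication by an indicator of an invariant function of $(G,\hat\eta,u)$, or a spectral projection of $A_G\in\cA_\rho$. Then $\dim\cF=\dE\mu_G^{\delta_o}(I)$ and $\dim\cL\cB=\dP(o\in LB)$. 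The dimension formula \eqref{eq:dimfor} gives $\dim\cF\le\dim(\cF\cap\cL\cB^\perp)+\dim\cL\cB$. It therefore suffices to show that $\cF'=\cF\cap\cL\cB^\perp=\{0\}$ when $|I|<\kappa\alpha^{-n}$.

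\emph{Inductive step.} Take $f\in\cF'$ and let $f_j$ be its restriction to $P_j$. Since $f$ vanishes on l-bad vertices, $f=\sum_j f_j$, and $f_0=0$ because no block of minimal label is a prodigy.

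For prodigy blocks $b$ of label $j+1$, define $T_{j+1}:\cP_{j+1}\to\cH_\rho$ as the direct sum of the maps $p_b:\ell^2(V_b)\to\ell^2(V_{\hat b})$. The key structural point is that distinct prodigy blocks $b\ne b'$ of label $j+1$ have $\hat b\ne\hat b'$. Indeed, if $\hat b=\hat b'$, then $b'\sim\hat b$ with $b'\neq b$, which would force $\eta(b')<\eta(b)$, contradicting equal labels. Hence the images $\ell^2(V_{\hat b})$ are orthogonal, $T_{j+1}$ is a direct sum of invertible maps, and $\|T_{j+1}^{-1}\|\le p_*^{-1}$ on its image $\hat\cP_{j+1}$.

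Evaluating $(A_G-\lambda)f$ on $V_{\hat b}$ then reproduces \eqref{eq:Tj}:
$$\Pi_{j+1}(A_G-\lambda)f=T_{j+1}f_{j+1}+\Pi_{j+1}(A_G-\lambda)f_{\le j}.$$
This holds because every block adjacent to $\hat b$ other than $b$ has label $<j+1$, so the label-$(j+1)$ part of $f$ contributes on $V_{\hat b}$ only through $b$. Some care is needed here: $f$ may also be nonzero on other prodigy blocks of label $\ge j+1$ adjacent to $\hat b$. By the prodigy condition these have label $<\eta(b)$, so they are already in $f_{\le j}$.

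The rest is the same norm induction as in Lemma \ref{th:monof}: $\|f_j\|\le |I|\,\|f\|\,\alpha^{j-1}/p_*$, Pythagoras, and finally $\|f\|\le\|f\|\,|I|\,\alpha^n/\kappa$, which forces $f=0$. All estimates are operator-norm bounds on $\cH_\rho$, so they transfer verbatim.

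\emph{Second statement.} The same modification applies, now also excluding, for each $j$, the invariant subspace $\cF_j$ spanned by the spectral projection on $I_j$ of the restriction $A_{G_j}$ to level vertices of label $j$. This restriction is again a local operator, since level-ness is an invariant function of the block labelling. The boundary inclusion \eqref{eq:Lj} holds at block level by the same definition-chasing. The level-block step $\Pi^o_{j+1}(A_G-\lambda)f^o_{j+1}=(A_{G_{j+1}}-\lambda)f^o_{j+1}$ and the choice of $s_j$, $\alpha_j$, $C_0$ and $\beta$ are copied from Lemma \ref{th:monof}.

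\emph{Main obstacle.} I expect the only genuine point to be verifying that the block analogues of the prodigy map $T_j$, the boundary inclusion, and the subspaces $\cF_j$ are well defined and invariant in $\cL_\rho$ when blocks may be infinite. I would handle this by noting that each relevant operator has a kernel given by an invariant bounded function of $(G,\hat\eta,u,v)$, with $\hat b$ determined invariantly.
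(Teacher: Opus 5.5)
Your proposal is correct and follows essentially the same route as the paper. As in the paper, the only substantive change from Theorem~\ref{th:mono} is that $T_j$ becomes the block-diagonal operator $\bigoplus_b p_b$; it satisfies $\|T_j^{-1}\|_{\rm op}\le p_*^{-1}$ on its image because distinct prodigy blocks have distinct $\hat b$, and the second statement additionally uses the block version of the boundary inclusion \eqref{eq:Lj}. One small point: your ``some care'' remark is vacuous rather than a real issue, since by the prodigy condition no block other than $b$ with label $\ge j+1$ is adjacent to $\hat b$, and this is exactly why \eqref{eq:Tj} holds.
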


\begin{proof}
The proof is a repetition of the proofs of Lemma \ref{th:monof} and Theorem \ref{th:mono}. For the second statement, the closed vector spaces $\cL\cB$, $\cP_j$, $\cF$ and $\cF'$ are as in the proofs of Lemma \ref{th:monof} and Theorem \ref{th:mono}. The goal is to prove that $\cF'$ is trivial. The only difference is that the operator $T_j$ on $\cP_j$ defined below \eqref{eq:HR} is now given by the formula, for $ x \in V_b$ and $b$ prodigy with label $j$,  $T_j \delta_x = p_b \delta_x \in \ell^2(V_{\hat b}) $. Let $\hat \cP_j$ be the image of the operator $T_j$. By Definition \ref{def:labelP}, if $b \ne c$ are prodigy then $\hat b \ne \hat c$. Hence $\hat \cP_j = \oplus_{b \in P_j}\ell^2 (V_{\hat b})$  and $T_j$ decomposes diagonally on this direct sum. We deduce by assumption that $T_j$ restricted to $\cP_j$ satisfies a.s.\ $\| T_j^{-1} \|_{\rm op} \geq 1/ \min_{b \in P_j} s_b \geq 1/ p_*$ with $s_b$ the smallest singular value of $p_b$ on $\ell^2 (V_b)$. The rest of the proof is unchanged.  

For the first statement, the only minor modification is that the outer boundary of $L_j$, $\partial L_j$, is defined as the set of vertices in a block say $c$, such that $V_c \cap L_j = \emptyset$ but $c \sim b$ with $V_b \subset L_j$. Observe that $\partial L_j$  is included in the union of $P_j$ and blocks with label less than $j$. Hence \eqref{eq:Lj} also holds for blocks. The rest of the proof is unchanged. \end{proof}

\section{Some applications to transitive and quasi-transitive graphs}
\label{sec:group}

In this section we prove the results stated in introduction. Since Theorem \ref{th:mainGQI} implies Theorem \ref{th:mainGI} which implies Theorem \ref{th:mainG}, it is sufficient to prove Theorem \ref{th:mainGQI}. However, for the clarity of the exposition, we start with a proof of Theorem \ref{th:mainG}.

\subsection{Proof of Theorem \ref{th:mainG}}
Let $G_0 = \CAY(\Gamma,S)$ and $G = (G_0,p)$ the weighted graph associated to $p \in \dC[\Gamma]$ as in Example \ref{ex:1}, with $G = G_1$. We  give a probabilistic proof. With $k$ as in the statement of the theorem, for an integer $n \geq 2k$, let $\omega \in \{0,1,\ldots,n-1\}$ be a uniform random variable. 

Now, let $\phi \in \mathrm{hom}(\Gamma,\dZ)$ be as in Definition \ref{def:indicable} and set, for $x \in \Gamma$, $\eta (x) = \phi(x) + \omega \; \mathrm{mod}(n)$. This defines a random labelling of $\Gamma$ with $n$ distinct values. This labelling is invariant. Indeed, for any $x,y \in \Gamma$, we have $\eta(xy) = \phi(x) + \phi(y) + \omega \; \mathrm{mod}(n)$.  However, for any integer $j$, $ j + \omega \; \mathrm{mod}(n)$ has the same distribution as $\omega$. In particular, for any $y \in \Gamma$, the array $(\eta(xy))_{ x \in \Gamma}$ is equal in distribution to $(\eta(x))_{ x \in \Gamma}$, the invariance of $\eta$ follows.

Next, observe  from the definition that $\phi(a^{-1}) = -k$ and $|\phi(b)| < k$ for all $b \in S \backslash\{a^{-1},a\}$. Hence, if $x \in \Gamma$ has label $\eta(x) = j \geq 2k$, then it has a neighbor $\hat x = a^{-1} x$ with label $\eta(\hat x) = j-k$ and the other neighbors $y$ of $\hat x$ have labels $j-2k$ (for $y = a^{-2}x$) or in $\{j-2k+1,\ldots,j-1\}$ (for $y = b a^{-1} x$, $b \in S \backslash \{a , a^{-1}\}$). Hence $x$ is a prodigy. If $\eta (x) \in \{0,\ldots,2k-1\}$ then $x$ could be l-bad and this occurs with probability $2k/n$ since $\omega$ is uniform. From Theorem \ref{th:mono}, we get that if $I \subset \sigma(p)$ is a closed interval of length at most  $ \kappa \alpha^{-n}$, we have  
$$
\mu_p (I) = \mu^{\delta_{\ee}}_{G_1} ( I ) \leq \dP ( \ee \hbox{ is l-bad}) \leq \frac{2k}{n},
$$
(this bound is also trivially true for $n < 2k$). We take $$n = \lfloor \frac{\ln (\kappa / |I| ) }{ \ln  \alpha  }\rfloor \geq \frac{ \ln (\kappa / |I| ) }{ \ln \alpha} - 1 = \frac{\ln (\kappa / (\alpha |I|) ) }{\ln \alpha}.$$
Since $\kappa / \alpha \geq (6 \sqrt{2}  K ) / ( 3K / p_*) = \beta$, the conclusion follows. \qed

\subsection{Proof of Theorem \ref{th:mainGQI}}

Let $\phi \in \mathrm{hom}(\Gamma,\dZ)$ be as in Definition \ref{def:indicable}. Recall $G_0$ and $G = G_3$ be as in Example \ref{ex:3}. Since we have assumed that a.s.\ $p_{g^{-1}} (x)_{| V_1 \times V} =  p_{g} (g x)_{| V \times V_1}^* = 0$ for all $g\in S$ such that $\phi(g) > 0$, we can remove from $G_0$ all edges of the form $e = ( (gx,u), (x ,v) )$ with $x \in \Gamma$, $u \in V$ and $v \in V_1$ and their inverses $e^{-1 } = ( (x ,v) , (gx,u) )$. We consider the deterministic partition of $V \times \Gamma$ indexed  by $\cB = \Gamma \times \{0,1\}$, $\{ V_{x,\epsilon} \}_{(x,\epsilon) \in \cB}$ given by $V_{x,\epsilon} = \{ (x,v) : v \in V_{\epsilon}\} = V_\epsilon \times \{x \}$.

The proof is very similar to the proof of Theorem \ref{th:mainG}. For integer $n \geq 2k$,  set for $(x,\epsilon) \in \cB$, $\eta (x,\epsilon) = \phi(x) + \omega \; \mathrm{mod}(n)$ where $\omega$ is uniformly random and independent of $P = (p_g(x))_{g \in S,x \in \Gamma} \in M_V^{S \times \Gamma}(\dC)$. This defines a random block labelling with $n$ distinct values. Arguing as in the proof of Theorem \ref{th:mainG}, this block labeling is invariant.

By definition $\phi(a) = - \phi(a^{-1}) = k $ and $|\phi(g)| < k $ for all $g \in S \backslash\{a,a^{-1}\}$. If the block $b = (x,0) \in \Gamma$ has label $\eta(b) = j \geq 2k$, then it has a neighbor $\hat b = (a^{-1} x,0)$ with label $\eta(\hat b) = j-k$ and the other block neighbors $c$ of $\hat b$ have labels $j-2k$ (for $c = (a^{-2}x,\epsilon)$), $j-k$ (for $c = ( a^{-1} x,1)$) or at most $j-1$ (for $c = (g a^{-1} x,\epsilon)$, $g \notin \{a , a^{-1}\}$). Hence the block $b$ is a prodigy. On the other hand, if  $b = (x,1) \in \Gamma$ has label $\eta(b) = j \geq k$, then it is level. Indeed, since there is no edge $((gx,\epsilon),b)$ with $\phi(g) > 0$, the set of neighbors of $b$ is in  $\{ ( x,\epsilon), (g x , \epsilon) : \epsilon \in \{0,1\},  g \in S : \phi(g) \leq 0\}$ whose labels are in $\{j-k,\ldots,j\}$.  If $\eta (b) \leq 2k$ then $b$ can be bad and this occurs with probability $2k/n$ since $\omega$ is uniform.

We notice finally that for any $j \geq 1$, if $G_j$ is the restriction of $G$ to level $j$ vertices and the block $(\ee,1)$ is level $j$, then the connected component of $G_j$ containing $(\ee,1)$ has the same distribution for each $j$. Namely, its skeleton $\bar G_1$ is the tensor product between the complete graph on $V_1$ (with loops) and $\CAY ( \Gamma_1, S_1)$ where $S_1 = S \backslash \{a,a^{-1}\}$ and $\Gamma_1$ is the subgroup generated by $S_1$. The weights on this skeleton graph $\bar G_1$ are given by the right-invariant array $P_1 = (p_g(x)_{|V_1})_{g \in S_1, x \in \Gamma_1}$.  

The conclusion follows from an application of Theorem \ref{th:monoP}. More precisely, we fix $\theta > 1$ and consider an interval $I$ of length at most $1$. We take $n = \lfloor (\ln (1/|I|)/ \ln \beta )^{\delta} \rfloor$ with $\delta = \gamma_1 / (1 + \theta \gamma_1)$ and $\beta$ as in  Theorem \ref{th:monoP}. Since $\ln (1/|I|^{c/n^\theta})  = O ( \ln (1/|I|) )^{1-\theta\delta }$ and $\gamma_1 ( 1- \theta \delta) = 1/ ( 1+ \theta \gamma_1)$ , we obtain that the claim of Theorem \ref{th:mainGQI} by taking arbitrarily close to $1$. \qed

\subsection{Indicable groups}
\label{subsec:bonus}

In this subsection, we discuss the situation when $\Gamma$ is  indicable but  when the assumptions of Theorem \ref{th:mainG} are not fulfilled for $S = \SUPP(p)$, $p = p^* \in \dC[\Gamma]$. We notably have the following:

\begin{theorem}\label{th:bonus}
Let $\Gamma$ be an indicable group and $S = S^{-1}$ be a finite symmetric generating set of $\Gamma$. There exist $p  = p^* \in \dC [\Gamma]$ with support $S$ and constants $c_1,c_2 >0$ such that for any interval $I \subset \sigma(p)$, we have 
$$
\mu_p( I) \leq \frac{c_1} {\ln ( c_2 / |I|) }.
$$ 
In particular, $\mu_p$ has no atom.
\end{theorem}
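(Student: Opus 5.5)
Given a homomorphism $\phi:\Gamma\to\dZ$ (surjective, since $\Gamma$ is indicable), the idea is to pass to a finite-index subgroup, rewrite $\lambda_\Gamma(p)$ as a block operator on $\dC^V\otimes\ell^2(\Gamma')$ of the form \eqref{eq:defPlift}, and then apply Theorem \ref{th:monoP}. The block labelling follows the proof of Theorem \ref{th:mainGQI}; for that we need an invertible block in the top $\phi$-direction. The freedom we have is the choice of the coefficients $p_g$, $g\in S$. We fix them so that $p_g$ is tiny (or generic) for every $g\neq a^{\pm1}$, and so that the block of coefficients attached to the maximal $\phi$-value is invertible with good bounds.

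\textbf{Step 1 (reduction).} Let $m=\max_{g\in S}\phi(g)\geq 1$; it is positive because $S$ generates and $\phi\neq0$. Put $S_+=\{g\in S:\phi(g)=m\}$. If $|S_+|=1$, then $(\Gamma,S)$ is $(a,m)$-indicable, and any $p=p^*$ with support $S$ works by Theorem \ref{th:mainG}. In general, several elements share the maximum. We fix $t\in\Gamma$ with $\phi(t)=1$ and a large integer $N$, and consider $\Gamma'=\phi^{-1}(N\dZ)$, a finite-index normal subgroup. Writing $\Gamma=\bigsqcup_{r=0}^{N-1} t^r\Gamma'$ identifies $\ell^2(\Gamma)\simeq \dC^{V}\otimes \ell^2(\Gamma')$, where $V$ is a finite set indexing the coset representatives. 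A cleaner alternative is to take as blocks the fibers of a coarse labelling, $V_b=\{x:\lfloor \phi(x)/m\rfloor = b\}$ for $b\in\dZ$. These blocks are infinite, and Definition \ref{def:labelP} explicitly allows that. Neighbouring blocks then differ by at most one index, so a block $b$ has neighbours $b-1,b,b+1$.

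\textbf{Step 2 (block labelling).} Set $\eta(b)=b+\omega \bmod n$ with $\omega$ uniform, as in the proof of Theorem \ref{th:mainG}. This labelling is invariant. For $\eta(b)\geq 2$, the block $\hat b=b-1$ has neighbours $b-2$, $b-1$ and $b$, and only $b$ reaches label $\eta(b)$. So $b$ is a prodigy, provided the interaction map $p_b:\ell^2(V_b)\to\ell^2(V_{b-1})$ is invertible with $\|p_b^{-1}\|_{\rm op}\leq p_*^{-1}$ uniformly in $b$. The only l-bad blocks are those with label $0$ or $1$, which happens with probability $2/n$. Theorem \ref{th:monoP} then gives $\mu_p(I)\leq 2/n$ whenever $|I|\leq\kappa\alpha^{-n}$. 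Optimising $n$ exactly as in the proof of Theorem \ref{th:mainG} yields $c_1/\ln(c_2/|I|)$.

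\textbf{Step 3 (choice of $p$; main obstacle).} The hard part is the uniform invertibility of $p_b$. This map is the part of $p$ that moves from level $b$ to level $b-1$, so it mixes many generators: all $g$ with $\phi(g)<0$ enter, together with some $g$ having $\phi(g)\geq0$ depending on the position inside the block. By right-invariance, $p_b$ is unitarily equivalent to an element of a von Neumann algebra over $\ker\phi$, so a bounded inverse is a norm-perturbation question. I would first make $p_{a}$ dominant for one chosen $a\in S_+$ with $|p_a|=1$, and put $|p_g|\leq\varepsilon$ for all other $g\neq a^{\pm1}$. The block structure needs adjusting so that only the generator $a^{-1}$ (with $\phi(a^{-1})=-m$) links the top strip of $V_b$ to $V_{b-1}$ in a bijective way. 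If that works, then $p_b=p_{a^{-1}}\lambda(a^{-1})+E$ with $\|E\|\leq |S|\varepsilon$, and a Neumann series gives $\|p_b^{-1}\|\leq 2$ once $\varepsilon<1/(2|S|)$.

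I expect this step to be the delicate one. Since $V_b$ has $\phi$-width $m$, the translate $a^{-1}V_b$ only partially meets $V_{b-1}$, so the naive $p_b$ is not square. To fix this, I would instead choose the blocks as $V_b=\{x:\phi(x)=b\}$, i.e.\ fibers of $\phi$ itself, with the labelling $\eta(b)=b+\omega\bmod n$ and $\hat b=b-m$. The prodigy condition then requires that every neighbour $c\neq b$ of $\hat b$ satisfies $\eta(c)<\eta(b)$. This fails precisely because several $g\in S_+$ give neighbours at the same level. Coarse-graining by a factor $m$ repairs this as in Step 1, at the cost of the squareness issue above. Handling this trade-off is where I expect the main work to lie, possibly via a non-uniform block partition adapted to $a$.
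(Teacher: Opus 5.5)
\textbf{Gap: Step 3 is never closed, and both objections you raise to the natural block choices are mistaken.} As written, your proposal does not prove the theorem: the uniform invertibility of the block maps $p_b$ is left as an open ``main obstacle''.

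\textbf{Fiber blocks work, and this is the paper's proof.} Take $V_b=\phi^{-1}(b)$, or as in the paper $V_b=\{x:\phi(x)+\omega\equiv b \bmod n\}$, with $\hat b=b-m$. Several elements of $S_+$ do \emph{not} break the prodigy condition of Definition~\ref{def:labelP}. For $x\in V_{\hat b}$ and $a\in S_+$ one has $ax\in V_b$, so all these edges land in the single block $b$, which is exactly the block excluded by $c\neq b$. Every other block adjacent to $\hat b$ lies in $\{b-2m,\dots,b-1\}$. The multiplicity of $S_+$ only enters the block map. Since $gy\in V_{b-m}$ with $y\in V_b$ forces $\phi(g)=-m$, writing $S_+=\{a_1,\dots,a_d\}$ you get $p_b=\sum_{i=1}^d \bar p_{a_i}\lambda(a_i^{-1})$ restricted to $\ell^2(V_b)$. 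Each $\lambda(a_i^{-1})$ is a unitary from $\ell^2(V_b)$ onto $\ell^2(V_{b-m})$. The missing idea is therefore just to choose $p$ with one top coefficient dominating, $|p_{a_1}|>\sum_{i\ge 2}|p_{a_i}|$. A Neumann series then gives $\|p_b^{-1}\|_{\rm op}\le(|p_{a_1}|-\sum_{i\ge 2}|p_{a_i}|)^{-1}$ uniformly in $b$. Theorem~\ref{th:monoP} applies with l-bad probability at most $2m/n$. This argument places no constraint on the coefficients $p_g$ with $g\notin S_+\cup S_+^{-1}$.

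\textbf{Your coarse-block route also works, once the errors are removed.} For $V_b=\{x:\lfloor\phi(x)/m\rfloor=b\}$ and $\phi(a^{-1})=-m$, one has $a^{-1}V_b=V_{b-1}$ exactly. So there is no squareness problem. Only $g$ with $\phi(g)<0$ contribute to $p_b$; none with $\phi(g)\ge 0$ do. Thus $p_b=p_{a^{-1}}\lambda(a^{-1})+E$, where $E$ is a sum of compressions of $\lambda(g)$, each of norm at most $1$, and your bound $\|E\|_{\rm op}\le |S|\varepsilon$ plus the Neumann series go through.

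The one genuine repair is invariance. The partition $\{\lfloor\phi(x)/m\rfloor=b\}$ is not preserved by right translation by $y$ with $m\nmid\phi(y)$, so it does not give an invariant block labelling. You must randomise its offset, for instance by using $\lfloor(\phi(x)+\omega')/m\rfloor$ with $\omega'$ uniform on $\{0,\dots,m-1\}$ and independent of $\omega$. That variant proves the theorem for a more restrictive class of $p$ (all coefficients off $a^{\pm1}$ small). The paper's dominance condition involves only the top-level generators.
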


\begin{proof}
Fix a surjective  $\phi \in \mathrm{Hom}( \Gamma, \dZ)$. Let $S^+_0 \subset S$ be the set of $a \in S$ such that $\phi(a) = \max_{s \in S } \phi(s)$. Note that $ k = \max_{s \in S} \phi(s) \geq 1$ since $S$ generates $\Gamma$. We set $S^+_0 = \{a_1,\ldots,a_d\}$ for some arbitrary ordering. 
We claim that if $p = p^* \in \dC[\Gamma]$ has support $S$ and
\begin{equation}\label{eq:gerch}
|p_{a_1}| > \sum_{i=2}^d |p_{a_i}|
\end{equation}
then $\mu_p$ has logarithmic regularity (where the sum over an empty set is zero by convention). If $d=1$, then the statement is already proved in Theorem \ref{th:mainG}.

We assume $d \geq 2$. Let $n \geq 2k$ be an integer, $\omega \in  \Pi = \{0,1,\ldots,n-1\}$ be uniformly distributed and $\eta (x) = \phi(x) + \omega \; \mathrm{mod}(n)$ as in the proof of Theorem \ref{th:mainG}. For $b \in \Pi$, let $V_{b} \subset \Gamma$ be the subset of elements of $\Gamma$ such that   $\eta(x) = b$. This defines an invariant block labelling in the sense of Definition \ref{def:labelP} where $b \in \Pi$ has label $b$.  If $b \geq 2k$, setting $\hat b = b - k$, we see that $b$ is prodigy and that the probability that $\ee$ is l-bad is at most $2k/n$. Moreover, the restriction  $p_b$ of $\lambda(p)$ from $\ell^2 (V_b)$ to $\ell^2 (V_{\hat b})$ is the restriction from $\ell^2 (V_b)$ to $\ell^2 (V_{\hat b})$ of 
$$
\lambda(q) = \sum_{i=1}^d p_{a_i^{-1}} \lambda(a^{-1}_i) = \sum_{i=1}^d \bar p_{a_i} \lambda(a^{-1}_i).
$$

Observe that $\lambda(a^{-1}_i)$ defines an isometry from $\ell^2 (V_b)$ to $\ell^2(V_{\hat b})$. In particular, from the triangle inequality, for any $f \in \ell^2 (V_b)$, $\| p_b f \| = \| \lambda(q) f \| \geq (|p_{a_1}| - \sum_{i=2}^d |p_{a_i}| ) \| f \|$ and, under the condition \eqref{eq:gerch}, $p_b$ is invertible with bounded inverse. The claim follows from an application of Theorem \ref{th:monoP}. \end{proof}

\begin{remark}\label{rk:GZ}
On the lamplighter group $\Gamma = \dZ_2 \wr \dZ$, Grigorchuk and {\.Z}uk \cite{zbMATH01421105} give a generated set $S = (g_1,g_2,g_1^{-1},g_2^{-1})$  such that for the obvious surjective $\phi \in \mathrm{Hom}(\Gamma,\dZ)$, $\phi(g_1) = \phi(g_2) = 1$ and  $\mu_p$ is purely atomic when $p  = \IND_{S} = g_1 + g_2 + g_1^{-1} + g_2^{-2}$. On the other end, the proof of Theorem \ref{th:bonus} implies that  $\mu_p$ is purely continuous for any $p = p_1 g_1 + p_2 g_2 + \bar p_1 g^{-1} + \bar p_2 g_2^{-2}$ with $|p_1| \ne |p_2|$. From this perspective, the celebrated example of Grigorchuk and {\.Z}uk depicts  a rather exceptional behaviour. 
\end{remark}

\bibliographystyle{abbrv}
\bibliography{bib}
\end{document}